\documentclass[9pt,twocolumn,twoside]{IEEEtran}
\usepackage{textcomp}
\def\BibTeX{{\rm B\kern-.05em{\sc i\kern-.025em b}\kern-.08em
    T\kern-.1667em\lower.7ex\hbox{E}\kern-.125emX}}

\usepackage{amsmath} %
\usepackage{amssymb}  %

\usepackage{amsthm}
\usepackage{algorithm}
\usepackage[noend]{algpseudocode}
\usepackage{caption}
\usepackage{subcaption}

\usepackage{enumitem}

\usepackage{tikz}
\usepackage{pgfplots}
\usetikzlibrary{external}
\tikzexternalize[prefix=tikzcache/]

\newcommand{%
  \tikzsetnextfilename{}%
  \input{}%
}[1]{%
  \tikzsetnextfilename{#1}%
  \input{#1}%
}

\DeclareMathOperator{\tr}{tr}

\newtheorem{theorem}{Theorem}

\newtheorem{lemma}[theorem]{Lemma}
\newtheorem{corollary}[theorem]{Corollary}
\theoremstyle{definition}

\newtheorem{assumption}[theorem]{Assumption}
\newtheorem{remark}[theorem]{Remark}

\usetikzlibrary{intersections,fillbetween,backgrounds}

\title{Safe and Efficient Switching Mechanism Design for Uncertified Linear Controller}

\author{Yiwen Lu and Yilin Mo%
\thanks{This work is supported by the National Key Research and Development Program of China under Grant 2018AAA0101601. The authors are with the Department of Automation and BNRist, Tsinghua University, Beijing, P.R.China. Emails: {\tt\small luyw20@mails.tsinghua.edu.cn, ylmo@tsinghua.edu.cn}.}
}

\begin{document}

\maketitle
\thispagestyle{empty}
\pagestyle{empty}

\begin{abstract}
	Sustained research efforts have been devoted to learning optimal controllers for linear stochastic dynamical systems with unknown parameters, but due to the corruption of noise, learned controllers are usually uncertified in the sense that they may destabilize the system. To address this potential instability, we propose a ``plug-and-play'' modification to the uncertified controller which falls back to a known stabilizing controller when the norm of the difference between the uncertified and the fall-back control input exceeds a certain threshold. We show that the switching strategy is both safe and efficient, in the sense that: 1) the linear-quadratic cost of the system is always bounded even if original uncertified controller is destabilizing; 2) in case the uncertified controller is stabilizing, the performance loss caused by switching converges super-exponentially to $0$ for Gaussian noise, while the converging polynomially for general heavy-tailed noise. Finally, we demonstrate the effectiveness of the proposed switching strategy via numerical simulation on the Tennessee Eastman Process.
\end{abstract}

\section{Introduction}

Learning a controller from noisy data for an unknown system  has been a central topic to adaptive control and reinforcement learning~\cite{aastrom2013adaptive,bertsekas2019reinforcement,recht2019tour,de2021low} for the past decades.
A main challenge to directly applying the learned controllers to the system is that they are usually \emph{uncertified}, in the sense it can be very difficult to guarantee the stability of such controllers due to process and measurement noise. One way to address this challenge is to deploy an additional safeguard mechanism. In particular, assuming the existence of a known stabilizing controller, empirically the safeguard may be implemented by falling back to the stabilizing controller from the uncertified controller, when potential safety breach is detected.

Motivated by the above intuition, this paper proposes such a switching strategy, provides a formal safety guarantee and quantifies the performance loss incurred by the safeguard mechanism, for discrete-time Linear-Quadratic Regulation (LQR) setting with independent and identically distributed process noise with bounded fourth-order moment.
We assume the existence of a known stabilizing linear feedback control law $u = K_0x$, which can be achieved either when the system is known to be open-loop stable (in which case $K_0 = 0$), or through adaptive stabilization methods~\cite{byrnes1984adaptive,faradonbeh2018finite}.
Given an uncertified linear feedback control gain $K_1$,
a modification to the control law $u = K_1 x$ is proposed:
the controller normally applies $u = K_1 x$, but falls back to $u = K_0 x$ for $t$ consecutive steps once $\| (K_1 - K_0) x \|$ exceeds a threshold $M$. The proposed strategy is analyzed from both stability and optimality aspects.
In particular, the main results include:
\begin{enumerate}
    \item We prove the LQ cost of the proposed controller is always \emph{bounded}, even if $K_1$ is destabilizing. This fact implies that the proposed strategy enhances the safety of the uncertified controller by preventing the system from being catastrophically destabilized.
    \item Provided $K_1$ is stabilizing, and $M,t$ are chosen properly, we compare the LQ cost of the proposed strategy with that of the linear feedback control law $u = K_1 x$, and quantify the maximum increase in LQ cost caused by switching w.r.t. the strategy hyper-parameters $M, t$ as merely $O(t^{1/4}\exp(-\mathrm{constant}\cdot M^2))$ in the case of Gaussian process noise, which decays \emph{super-exponentially} as the switching threshold $M$ tends to infinity. We also discuss the extension to general noise distributions with bounded fourth-order moments, where the above asymptotic performance gap becomes $O(t^{1/4}M^{-1})$. 
\end{enumerate}
The performance of the proposed switching scheme is further validated by simulation on the Tennessee Eastman Process example. We envision that the switching framework could be potentially applicable in a wider range of learning-based control settings, since it may combine the good empirical performance of learned policies and the stability guarantees of classical controllers, and the ``plug-and-play'' nature of the switching logic may minimize the required modifications to existing learning schemes.

A preliminary version of this paper~\cite{arxiv_version} has been submitted to IEEE CDC 2022. The main contributions of the current manuscript over the conference submission are: i) the switching scheme has been redesigned, such that the upper bound on LQ cost (Theorem~\ref{thm:bounded_cost}) no longer depends on $K_1$; ii) the conclusions have been extended to noise distributions with bounded fourth-order moments; iii) proofs of all theoretical results are included in the current version of the manuscript.

\subsection*{Related Works}

\paragraph*{Switched control systems}
Supervisory algorithms have been developed to stabilize switched linear systems~\cite{cheng2005stabilization,sun2005analysis,zhang2010asynchronously}, and other nonlinear systems that are difficult to stabilize globally with a single controller~\cite{prieur2001uniting,el2005output,battistelli2012supervisory}. However, most of the paper focuses on the stability of the switched system, while the (near-)optimality of the controllers are less discussed.
Building upon this vein of literature, the idea of switching between certified and uncertified controllers to improve performance was proposed in~\cite{wintz22global}, whose scheme guarantees global stability for general nonlinear systems under mild assumptions. However, no quantitative analysis of the performance under switching is provided. In contrast, we specialize our results for linear systems and prove that switching may induce only negligible performance loss while ensuring safety. 

\paragraph*{Adaptive LQR} Adaptive and learned LQR has drawn significant research attention in recent years, for which high-probability estimation error and regret bounds have been proved for methods including optimism-in-face-of-uncertainty~\cite{abbasi2011regret,cohen2019learning}, thompson sampling~\cite{abeille2018improved}, policy gradient~\cite{fazel2018global}, robust control based on coarse identification~\cite{dean2018regret} and certainty equivalence~\cite{mania2019certainty,faradonbeh2020optimism,faradonbeh2020adaptive,simchowitz2020naive}.
All the above approaches, however, involve applying a linear controller learned from finite noise-corrupted data, which has a nonzero probability of being destabilizing.
Furthermore, given a fixed length of data, the failure probabilities of the aforementioned methods depend on either unknown system parameters or statistics of online data, which implies the failure probability cannot be determined \textit{a priori}, and hence it can be challenging to design an algorithm that strictly satisfies a pre-defined specification of safety.
In~\cite{wang2021exact}, a ``cutoff'' method similar to the switching strategy described in the present paper is applied in an attempt to establish almost sure guarantees for adaptive LQR, which are nevertheless asymptotic in nature, and the extra cost caused by switching is not analyzed.
By contrast, this manuscript provides both non-asymptotic and asymptotic bounds for the switching strategy. 

\paragraph*{Nonlinear controller for LQR} Nonlinearity in the control of linear systems has been studied mainly due to practical concerns such as saturating actuators. The performance of LQR under saturation nonlinearity has been studied in~\cite{gokcek2000slqr,gokcek2001lqr,ossareh2016lqr}, which are all based on stochastic linearization, a heuristics that replaces nonlinearity with approximately equivalent gain and bias. By contrast, the present paper treats nonlinearity as a design choice rather than a physical constraint, and provides rigorous performance bounds without resorting to any heuristics. 

\subsection*{Outline}

The remainder of this paper is organized as follows: Section~\ref{sec:problem} introduces the problem setting and describes the proposed switching strategy. The main results are provided in Section~\ref{sec:theory} and Section~\ref{sec:extension} for Gaussian process noise and noise with bounded fourth-order moments respectively. Section~\ref{sec:simulation} validates the performance of the proposed strategy with a industrial process example. Finally, Section~\ref{sec:conclusion} concludes the paper.

\subsection*{Notations}

The set of nonnegative integers are denoted by $\mathbb{N}$, and the set of positive integers are denoted by $\mathbb{N}^*$. For a square matrix $M$, $\rho(M)$ denotes the spectral radius of $M$, and $\tr(M)$ denotes the trace of $M$. For a real symmetric matrix $M$, $M\succ 0$ denotes that $M$ is positive definite.
$\|v\|$ denotes the 2-norm of a vector $v$ and $\|M\|$ is the induced 2-norm of the matrix $M$, i.e., its largest singular value. For $P\succ 0$, $\langle v, w \rangle_P = v^T P w$ is the $P$-inner product of vectors $v,w$, and $\| v \|_P=\|P^{1/2}v\|$ is the $P$-norm of a vector $v$. For two positive semidefinite matrices $P\succ 0, Q \succ 0$, $\|Q\|_P = \|P^{-1/2}QP^{-1/2}\|= \sup_{\|v\|_P=1}\|v\|_Q^2$. For a random vector $X$, $X\sim \mathcal{N}(\mu,\Sigma)$ denotes $X$ is Gaussian distributed with mean $\mu$ and covariance $\Sigma$. $\mathbb{P}(\cdot)$ denotes the probability operator, $\mathbb{E}(\cdot)$ denotes the expectation operator, and $\mathbf{1}_{E}$ is the indicator function of the random event $E$. For functions $f(x), g(x)$ with non-negative values, $f(x) = O(g(x))$ means $\limsup_{x\to\infty} f(x) / g(x) < \infty$, and $f(x) = \Theta(g(x))$ means $f(x) = O(g(x))$ and $g(x)=O(f(x))$.

\section{Problem Formulation and Proposed Switching Strategy}\label{sec:problem}

Consider the following discrete-time linear plant:
\begin{equation}
    x_{k+1} = Ax_k + Bu_k + w_k,
    \label{eq:dynamics}
\end{equation}
where $k\in \mathbb{N}$ is the time index, $x_k \in \mathbb{R}^n$ is the state vector, $u_k \in \mathbb{R}^m$ is the input vector, and $w_k \in \mathbb{R}^n$ is the process noise. Without loss of generality, the system is assumed to be controllable. We further assume that the initial state $x_0 = 0$, and that $\{ w_k \}$ are independent and identically distributed with covariance matrix $W\succ 0$. %

We measure the performance of a controller $u_k(x_{0:k})$ in terms of the infinite-horizon quadratic cost defined as:
\begin{equation}
    J=\limsup _{T \rightarrow \infty} \frac{1}{T} \mathbb{E}\left[\sum_{k=0}^{T-1} x_{k}^{T} Q x_{k}+u_{k}^{T} R u_{k}\right],
    \label{eq:J}
\end{equation}
where $Q \succ 0, R \succ 0$ are fixed weight matrices specified by the system operator.
It is well known that the optimal controller is the linear feedback controller of the form $u(x) = K^*x$, where the optimal gain $K^*$ can be determined by solving the discrete-time algebraic Riccati equation.

In this paper, we assume that the system and input matrices $A,B$ are unavailable to the system operator, and hence she cannot determine the optimal feedback gain $K^*$. Instead, she has the following two feedback gains:
\begin{itemize}
    \item \emph{Primary gain $K_1$}, typically learned from data, which can be close to $K^*$ but does not have stability guarantees;
    \item \emph{Fallback gain $K_0$}, which is typically conservative but always guaranteed to be stabilizing, i.e., $\rho(A+BK_0) < 1$.
\end{itemize}

Ideally, the system operator would want to use $K_1$ as much as possible, as it usually admits a better performance. However, since $K_1$ is not necessarily stabilizing, a switching strategy is deployed in pursuit of both safety and performance of the system. The block diagram of the closed-loop system under the proposed switching strategy is shown in Fig.~\ref{fig:block_diagram}, and the switching logic is described in Algorithm~\ref{alg:main}. In plain words, the proposed switched control strategy is normally applying $u = K_1 x$, while falling back to $u = K_0 x$ for $t$ consecutive steps once $\| (K_1 - K_0) x \|$ exceeds a threshold $M$.

\begin{figure}[!htbp]
    \centering
    \includegraphics{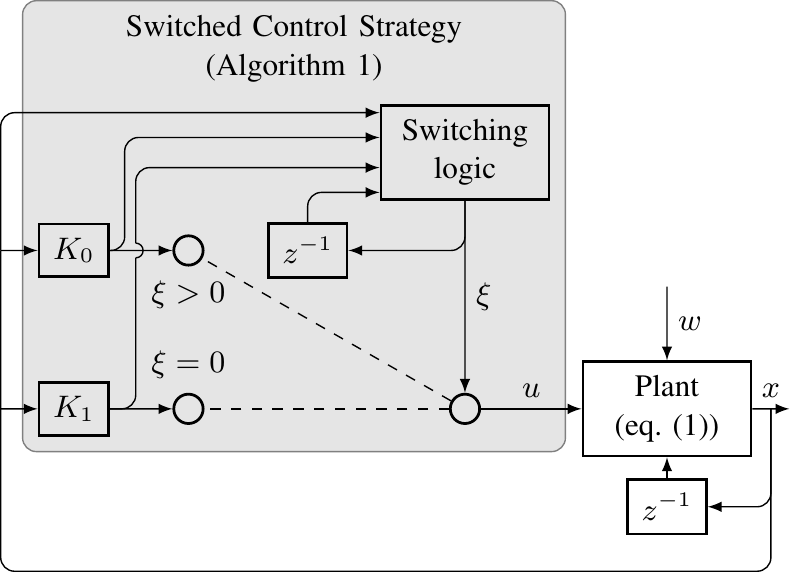}
    \caption{Block diagram of the closed-loop system under the proposed switching strategy. The controller selects $u = K_1 x$ when $\xi = 0$ and $u = K_0 x$ when $\xi > 0$, where $\xi$ is an internal counter determined by the switching logic.}
    \label{fig:block_diagram}
\end{figure}

\renewcommand{\algorithmicrequire}{\textbf{Input:}}
\renewcommand{\algorithmicensure}{\textbf{Output:}}
\begin{algorithm}[!htbp]
    \begin{algorithmic}[1]
        \Require Current state $x$, primary gain $K_1$, fallback gain $K_0$, current counter value $\xi$, switching threshold $M$, dwell time $t$ 
        \Ensure Control input $u$, next counter value $\xi'$
        \If{$\xi > 0$}
            \State $u \gets K_0 x$
        \Else
            \If{$ \| (K_1 - K_0) x \|  \geq M$}
                \State $\xi \gets t, u \gets K_0 x$
            \Else
                \State $u \gets K_1 x$
            \EndIf
        \EndIf
        \State $\xi' \gets \max\{\xi - 1, 0 \}$
    \end{algorithmic}
    \caption{Proposed switched control strategy}
    \label{alg:main}
\end{algorithm}

\section{Main Theoretical Results}\label{sec:theory}

This section is devoted to proving the stability of the proposed switching strategy as well as quantifying performance loss it incurs.
It is assumed throughout this section that the process noise obeys a Gaussian distribution, i.e., $w_k \sim \mathcal{N}(0, W)$.

\subsection{Upper Bound on the LQ Cost}

In this subsection, we prove that when $M$ and $t$ are fixed, the LQ cost associated with the proposed switched controller is always bounded, regardless of the choice of the underlying primary gain $K_1$ and hyper-parameters $M, t$. Notice that naively implementing the linear controller $u = K_1 x$ without the switching results in an infinite LQ cost when $K_1$ is destabilizing. As a result, the proposed switching strategy ensures the stability of the closed-loop system.

Since the fallback gain $K_0$ is stabilizing, there exists $P_0 \succ 0$ that satisfies the discrete-time Lyapunov equation
\begin{equation}
    (A+BK_0)^T P_0(A+BK_0) - P_0 + Q + K_0^T R K_0 = 0,
    \label{eq:P0}
\end{equation}
and hence there exists $0<\rho_0 < 1$ such that
\begin{equation}
    (A+BK_0)^T P_0(A+BK_0) \prec \rho_0P_0.
    \label{eq:rho0}
\end{equation}
The following lemma constructs a quadratic Lyapunov function from $P_0$ and states that the Lyapunov function has bounded expectation:
\begin{lemma}
    Assuming that $\rho_0, P_0$ satisfy~\eqref{eq:P0},
    let $V_{0,k} = x_k^T P_0 x_k$, then it holds for any $k$ that
    \begin{equation}
        \mathbb{E} V_{0,k}\leq \frac{4(1+\rho_0)(M^{2} \| B \|^{2} \| P_0 \| + \tr(WP_0))}{(1-\rho_0)^2}.
        \label{eq:EV}
    \end{equation}
    \label{lemma:EV}
\end{lemma}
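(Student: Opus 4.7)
The plan is to write the closed-loop dynamics in a form where $K_0$ is the nominal feedback and the deviation caused by using $K_1$ appears as a bounded exogenous perturbation, and then run the standard Lyapunov recursion for $P_0$.

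First I would note that at every step the control input can be written as $u_k = K_0 x_k + e_k$, where $e_k = 0$ whenever the fallback branch is taken (i.e.\ $\xi>0$ or the threshold is violated), and $e_k = (K_1-K_0)x_k$ whenever the primary branch is taken. In the latter case Algorithm~\ref{alg:main} only applies $K_1 x_k$ when $\|(K_1-K_0)x_k\| < M$, so in all cases $\|e_k\| \le M$ deterministically. Substituting into \eqref{eq:dynamics} gives
\begin{equation*}
x_{k+1} = (A+BK_0)\,x_k + B e_k + w_k,
\end{equation*}
which is the key reduction: independent of how destabilizing $K_1$ is, the state evolves under the stable matrix $A+BK_0$ driven by a bounded input $Be_k$ plus i.i.d.\ noise.

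Next I would compute $\mathbb{E} V_{0,k+1}$ using independence of $w_k$ from $(x_k,e_k)$ and $\mathbb{E} w_k = 0$:
\begin{equation*}
\mathbb{E} V_{0,k+1} = \mathbb{E}\bigl\|(A+BK_0)x_k + B e_k\bigr\|_{P_0}^2 + \tr(WP_0).
\end{equation*}
To isolate the cross term I would apply the Peter--Paul inequality $\|a+b\|_{P_0}^2 \le (1+\lambda)\|a\|_{P_0}^2 + (1+\lambda^{-1})\|b\|_{P_0}^2$ with a carefully chosen $\lambda>0$. Using \eqref{eq:rho0} on the first piece and $\|Be_k\|_{P_0}^2 \le \|B\|^2\|P_0\|\,M^2$ on the second yields
\begin{equation*}
\mathbb{E} V_{0,k+1} \le (1+\lambda)\rho_0\,\mathbb{E} V_{0,k} + (1+\lambda^{-1})M^2\|B\|^2\|P_0\| + \tr(WP_0).
\end{equation*}
Choosing $\lambda = (1-\rho_0)/(2\rho_0)$ makes the contraction factor $(1+\rho_0)/2 < 1$ and the gain factor $(1+\lambda^{-1}) = (1+\rho_0)/(1-\rho_0)$.

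Finally, since $x_0=0$ gives $V_{0,0}=0$, iterating the affine recursion $\mathbb{E} V_{0,k+1} \le \tfrac{1+\rho_0}{2}\,\mathbb{E} V_{0,k} + C$ and summing the geometric series yields $\mathbb{E} V_{0,k} \le 2C/(1-\rho_0)$, where $C = \tfrac{1+\rho_0}{1-\rho_0}M^2\|B\|^2\|P_0\| + \tr(WP_0)$. Bounding the $\tr(WP_0)$ coefficient $2/(1-\rho_0)$ by the larger $4(1+\rho_0)/(1-\rho_0)^2$ delivers \eqref{eq:EV} in the symmetric form stated. The only real subtlety is the first step — recognizing that the switching logic bounds $\|e_k\|$ by $M$ uniformly and deterministically, which is what makes the argument independent of $K_1$; everything afterwards is a standard Lyapunov contraction computation.
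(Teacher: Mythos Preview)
Your proof is correct and follows essentially the same route as the paper: rewrite the dynamics as $x_{k+1}=(A+BK_0)x_k + Be_k + w_k$ with $\|e_k\|\le M$, apply the Peter--Paul/Young inequality with exactly the same parameter (the paper's $\sigma=(\rho_0^{-1}-1)/2$ is your $\lambda$), and iterate the resulting affine recursion. The only cosmetic difference is that you exploit independence of $w_k$ to kill its cross term exactly before applying Peter--Paul to the pair $(A_0x_k,\,Be_k)$, whereas the paper groups $d_k+w_k$ together and bounds $\mathbb{E}\|d_k+w_k\|_{P_0}^2\le 2(M^2\|B\|^2\|P_0\|+\tr(WP_0))$; your intermediate bound is therefore slightly sharper, which is why you need the final relaxation $2/(1-\rho_0)\le 4(1+\rho_0)/(1-\rho_0)^2$ to match the stated form.
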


Based on the fact that the LQ cost can be upper bounded in terms of the expectation of the quadratic Lyapunov function $V_{0,k}$ defined above, we have the following theorem:

\begin{theorem}
    Assuming that $P_0, \rho_0$ satisfy~\eqref{eq:P0} and~\eqref{eq:rho0}, the LQ cost defined in~\eqref{eq:J} satisfies
    \begin{equation}
        J \leq \left(\frac{8(1+\rho_0)\| B \|^2 \| P_0 \|}{(1 - \rho_0)^2} + 2\|R\|\right)M^2+\frac{8(1+\rho_0)\tr(WP_0)}{(1 - \rho_0)^2}.
        \label{eq:J_unstable}
    \end{equation}
    \label{thm:bounded_cost}
\end{theorem}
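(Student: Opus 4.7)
The plan is to bound the per-step LQ cost $\mathbb{E}[x_k^T Q x_k + u_k^T R u_k]$ uniformly in $k$ by a quantity of the form $c_1 \mathbb{E} V_{0,k} + c_2 M^2$, and then invoke Lemma~\ref{lemma:EV} to finish, since $J$ is the $\limsup$ of the Ces\`aro average of these per-step costs.

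First I would use~\eqref{eq:P0} to rewrite it as $Q + K_0^T R K_0 = P_0 - (A+BK_0)^T P_0(A+BK_0)$. Dropping the nonnegative second term gives the pointwise inequality
\begin{equation*}
    x_k^T Q x_k + x_k^T K_0^T R K_0 x_k \;\leq\; x_k^T P_0 x_k \;=\; V_{0,k}.
\end{equation*}
This single inequality handles both terms of the stage cost whenever the fallback gain $K_0$ is active, and also shows $x_k^T K_0^T R K_0 x_k \leq V_{0,k}$ on its own, which I will reuse.

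Next I would split into the two switching cases. When $\xi>0$ the controller applies $u_k = K_0 x_k$, so the stage cost is at most $V_{0,k}$. When $\xi=0$ and $u_k = K_1 x_k$ is used, the switching rule guarantees $\|(K_1-K_0)x_k\| < M$, so writing $u_k = K_0 x_k + (K_1-K_0)x_k$ and using $(a+b)^T R (a+b) \leq 2 a^T R a + 2 b^T R b$ yields
\begin{equation*}
    u_k^T R u_k \;\leq\; 2\, x_k^T K_0^T R K_0 x_k + 2\|R\| M^2 \;\leq\; 2 V_{0,k} + 2\|R\| M^2 .
\end{equation*}
Combining with $x_k^T Q x_k \leq V_{0,k}$ (a further consequence of the previous inequality, since $K_0^T R K_0 \succeq 0$) shows that in either case
\begin{equation*}
    x_k^T Q x_k + u_k^T R u_k \;\leq\; 3 V_{0,k} + 2\|R\| M^2.
\end{equation*}
A slightly tighter version (using the Lyapunov identity to bundle $Q$ and $K_0^T R K_0$ together in the $K_1$-active case) gives the factor $2$ on $V_{0,k}$ that matches the constants in~\eqref{eq:J_unstable}; this is a routine refinement of the same calculation.

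Finally I would take expectations, apply Lemma~\ref{lemma:EV} to replace $\mathbb{E} V_{0,k}$ by the uniform bound $4(1+\rho_0)(M^2\|B\|^2\|P_0\|+\tr(WP_0))/(1-\rho_0)^2$, and then take $\limsup_{T\to\infty}\frac{1}{T}\sum_{k=0}^{T-1}$ to obtain the claimed bound on $J$. The only real subtlety is arranging the constants in the $K_1$-active branch so that the coefficient of $V_{0,k}$ is $2$ rather than $3$, matching the stated~\eqref{eq:J_unstable}; all remaining steps are direct consequences of the Lyapunov identity~\eqref{eq:P0} and Lemma~\ref{lemma:EV}, so I expect no serious obstacle.
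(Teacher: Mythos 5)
Your proposal is correct and takes essentially the same route as the paper: write $u_k = K_0 x_k + \Delta u_k$ with $\|\Delta u_k\|\leq M$, use the Lyapunov identity \eqref{eq:P0} to get $Q + K_0^T R K_0 \preceq P_0$, and finish with Lemma~\ref{lemma:EV}. The only difference is that your explicit case split gives $3V_{0,k}+2\|R\|M^2$ (hence a coefficient $12$ rather than $8$ in \eqref{eq:J_unstable}), but the refinement you indicate---bounding $x_k^T Q x_k + \|K_0x_k\|_R^2 \leq \|x_k\|_{P_0}^2$ together with $\|u_k\|_R \leq \|K_0x_k\|_R + M\|R\|^{1/2}$ before squaring, so that the stage cost is at most $(\|x_k\|_{P_0}+M\|R\|^{1/2})^2 \leq 2V_{0,k}+2M^2\|R\|$---is exactly the paper's calculation, so it closes as routinely as you claim.
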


\begin{proof}
    By definition of $J$, we only need to prove $\mathbb{E}(\| x_k \|_Q^2 + \| u_k \|_R^2)$ is not greater than the RHS of~\eqref{eq:J_unstable} for any $k$. Notice that
    \begin{align}
        & \| x_k \|_Q^2 + \| u_k \|_R^2 = \| x_k \|_{Q + K_0^T R K_0}^2 + \| u_k \|_R^2 - \| K_0x_k \|_{R}^2 \nonumber \\
        & \leq \| x_k \|_{P_0}^2 + \| u_k \|_R^2 - \| K_0x_k \|_{R}^2
        \label{eq:stage_cost}
    \end{align}
    By the switching strategy, it holds $u_k = K_0 x_k + \Delta u_k$, where $\Delta u_k:=(K_1 - K_0) x_k \mathbf{1}_{\{u_k = K_1 x_k\}}$ satisfies $\| \Delta u_k \| \leq M$, and hence,
    \begin{equation*}
        \| u_k \|_R \leq \| K_0 x_k \|_R + M\|R \|^{1/2},
    \end{equation*}
    which implies
    \begin{align}
        & \| u_k \|_R^2 - \| K_0x_k \|_{R}^2 \leq 2M\|R \|^{1/2}\| K_0 x \|_R + M^2 \| R \| \nonumber \\
        &\leq 2M \| R \|^{1/2}  \| x_k \|_{P_0} + M^2 \|R \|.
        \label{eq:u_cost_diff}
    \end{align}
    Substituting~\eqref{eq:u_cost_diff} into~\eqref{eq:stage_cost}, we get
    \begin{equation}
        \| x_k \|_Q^2 + \| u_k \|_R^2 \leq (\| x_k \|_{P_0} + M \| R \|^{1/2})^2 \leq 2(\| x_k \|_{P_0}^2 + M^2 \| R \|).
        \label{eq:stage_cost_2}
    \end{equation}
    Taking the expectation on both sides of~\eqref{eq:stage_cost_2} and applying Lemma~\ref{lemma:EV}, the conclusion follows.
\end{proof}

\subsection{Upper bound on performance loss caused by switching}
\label{sec:gap}

In this subsection, we quantify the extra LQ cost caused by the conservativeness of switching when $K_1$ is stabilizing. Let $J^{K_1}$ denote the LQ cost associated with the closed-loop system under the linear controller $u = K_1 x$, and $J^{K_1,M,t}$ denote the LQ cost associated with the closed-loop system under our proposed switched controller with primary gain $K_1$ and hyper-parameters $M,t$.
To quantify the behavior of the system under switching, we resort to a common quadratic Lyapunov function for $A+BK_1$ and $(A+BK_0)^t$, which always exists for sufficiently large dwell time $t$. Formally speaking, the following inequalities holds:
\begin{equation}
    \begin{cases}
        (A+BK_1)^T P (A+BK_1) \prec \rho P,\\
        ((A+BK_0)^t)^T P (A+BK_0)^t \prec \rho P,
    \end{cases}
    \label{eq:common_lyap}
\end{equation}
where $0 < \rho < 1$ and $P \succ 0$.
Notice that $\rho, P$ that satisfy the first inequality always exist due to the stability of $A+BK_1$, and given $\rho, P$, the quantity $t$ that satisfy the second inequality exists since $\lim_{t\to\infty}(A+BK_0)^t = 0$ by the stability of $A+BK_0$.

Before proving the main result, we need a supporting theorem, which quantifies the tail bound for an exponentially weighted sum of potentially dependent random variables with Gaussian-like tails:  

\begin{theorem}
    Let $\{ X_i \}$ be a sequence of  random variables that satisfy
    $
    \mathbb{P}( X_i \geq a) \leq C_1 \exp(-C_2a^2)
    $
    for any $i = 0,1,\ldots$ and any $a > 0$, where $C_1, C_2$ are positive constants.
	Let $S_k = \sum_{i = 0}^k \varrho^{k-i} X_i$, where $0 < \varrho < 1$, then for any $a\geq 2C_2^{-1/2}(1 - \varrho^{1/2})^{-1}$, it holds
    \begin{equation}
	    \mathbb{P}( S_k  \geq a) \leq \tilde{C}_1 \exp\left( -\tilde{C}_2a^2 \right),\label{eq:tailbound}
    \end{equation}
    where $\tilde{C_1} = 2C_1 ({\min\{\varrho^{-1} - 1, 1\}})^{-1}$, $\tilde{C}_2 = {(1 - \varrho^{1/2})^2}\allowbreak C_2$.
    \label{thm:exp_weight_sum}
\end{theorem}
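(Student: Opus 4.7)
The plan is to reduce the tail bound to a weighted union bound, since the random variables $X_i$ are only assumed to be individually sub-Gaussian and nothing is imposed on their joint distribution. The starting observation is deterministic: for any nonnegative weights $\{p_i\}_{i=0}^k$ with $\sum_i p_i \leq 1$, if every summand satisfied $\varrho^{k-i} X_i < p_i a$ then we would conclude $S_k < a$, so the event $\{S_k \geq a\}$ is contained in $\bigcup_i \{X_i \geq p_i \varrho^{-(k-i)} a\}$.

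I would then choose the geometric-mean schedule $p_i = (1-\varrho^{1/2})\varrho^{(k-i)/2}$. A short geometric-sum computation gives $\sum_{i=0}^k p_i = 1 - \varrho^{(k+1)/2} \leq 1$, and each induced threshold yields tail exponent $C_2 a^2 p_i^2 \varrho^{-2(k-i)} = \tilde C_2 a^2 \varrho^{-(k-i)}$ with $\tilde C_2$ exactly the constant in the statement. The union bound and the hypothesis then give, writing $j = k-i$,
\begin{equation*}
\mathbb{P}(S_k \geq a) \leq C_1 \sum_{j=0}^{k} \exp\bigl(-\tilde C_2 a^2 \varrho^{-j}\bigr).
\end{equation*}

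To collapse this into a single exponential, I would establish the elementary inequality $\varrho^{-j} \geq 1 + j\mu$ for integer $j \geq 1$, where $\mu := \min\{1, \varrho^{-1}-1\}$. This splits into two cases: if $\varrho \leq 1/2$ then $\varrho^{-j} \geq 2^j \geq 1+j$; if $\varrho > 1/2$ then $\varrho^{-j}-1 = (\varrho^{-1}-1)(\varrho^{-(j-1)}+\cdots+1) \geq j(\varrho^{-1}-1)$. Substituting into each $j \geq 1$ term and summing the resulting geometric tail yields
\begin{equation*}
\mathbb{P}(S_k \geq a) \leq \frac{C_1 \exp(-\tilde C_2 a^2)}{1 - \exp(-\tilde C_2 a^2 \mu)}.
\end{equation*}
Finally, the hypothesis $a \geq 2 C_2^{-1/2}(1-\varrho^{1/2})^{-1}$ gives $\tilde C_2 a^2 \geq 4$, which combined with the elementary bound $1-e^{-x} \geq x/(1+x)$ and $\mu \leq 1$ yields $(1-\exp(-\tilde C_2 a^2\mu))^{-1} \leq 2/\mu$, producing the stated constant $\tilde C_1 = 2C_1/\mu$.

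The main obstacle is the choice of weights $p_i$. They must sum to at most one while generating an exponent that is uniformly at least $\tilde C_2 a^2$ in $i$ (to recover the claimed exponential factor) yet grows rapidly enough in $k-i$ that the union-bound sum stays under control. The geometric-mean schedule $p_i \propto \varrho^{(k-i)/2}$, sitting precisely halfway between uniform weights and the natural coefficients $\varrho^{k-i}$, is the natural balance point; after that, the bookkeeping is routine, but the appearance of $\min\{1,\varrho^{-1}-1\}$ in $\tilde C_1$ is forced by the two regimes of the key inequality $\varrho^{-j}\geq 1+j\mu$ and cannot be avoided if one wants a bound uniform in $\varrho\in(0,1)$.
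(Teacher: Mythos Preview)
Your proof is correct and follows essentially the same route as the paper: both use the union-bound inclusion with the geometric-mean weights $p_i=(1-\varrho^{1/2})\varrho^{(k-i)/2}$ (the paper phrases this as $a=\sum_i\sigma^i(1-\sigma)a$ with $\sigma=\varrho^{1/2}$), both arrive at the sum $\sum_{j\geq 0}\beta^{\varrho^{-j}}$ with $\beta=\exp(-\tilde C_2 a^2)$, and both reduce it to a geometric series via a Bernoulli-type bound $\varrho^{-j}\geq 1+j\mu$. The only differences are cosmetic: you handle the two regimes of $\varrho$ upfront through $\mu=\min\{1,\varrho^{-1}-1\}$, whereas the paper applies Bernoulli for all $\varrho$ and then invokes monotonicity in $\alpha=\varrho^{-1}$ at the end; and in the final denominator you use $1-e^{-x}\geq x/(1+x)$ while the paper uses $\beta^{\alpha-1}\leq 1+(\alpha-1)(\beta-1)$.
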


\begin{remark}
	Notice that if $X_i$s are jointly Gaussian distributed, then \eqref{eq:tailbound} can be trivially proved by computing the covariance of $S_k$, which is also Gaussian. In essence, Theorem~\ref{thm:exp_weight_sum} serves as an extension of the result for jointly Gaussian random variables, by allowing non-Gaussian random variables with Gaussian-like tail distribution, and removing any restriction on the joint distribution between random variables.
\end{remark}

By leveraging Theorem~\ref{thm:exp_weight_sum}, we can bound the fourth moment of the state $x_k$ as well as the probability of the switching:

\begin{theorem}
    Assume that $P_0, \rho_0$ satisfy~\eqref{eq:P0} and~\eqref{eq:rho0}, and that $\rho, P, t$ satisfy~\eqref{eq:common_lyap}. Let $\tilde{W} = \sum_{\tau=0}^\infty (A+BK_0)^\tau W((A+BK_0)^\tau)^T$, $\mathcal{K} = \| K_1 - K_0 \|$, and $a_0 = (8n \| \tilde{W} \| \| P \| \| P^{-1} \|)^{1/2}(1 - \rho^{1/4})^{-1}$. If the threshold $M \geq a_0\mathcal{K}$ is large enough, then the following statements hold:
    \begin{enumerate}
        \item The fourth moments of $x_k$ is bounded:
        \begin{equation}
            \mathbb{E} \| x_k \|^4_{P_0} \leq 8(\mathcal{Q} \| P_0 \|_P^2+(n^2 + 2n)\| P_0 \|_{\tilde{W}^{-1}}^2),
        \end{equation}
        where 
        \begin{equation*}
            \mathcal{Q} = \frac{6\rho (\tr(\tilde{W}P)^2) + (1 - \rho)(n^2+2n)\| P \|_{\tilde{W}^{-1}}^2}{(1 - \rho)(1 - \rho^2)},
        \end{equation*}
        \item The probability of not using feedback gain $K_1$ satisfies:
        \begin{equation*}
            \mathbb{P}(u_k \neq K_1 x_k) \leq t \mathcal{E}(M / \mathcal{K}),
        \end{equation*}
        where
        \begin{equation*}
            \mathcal{E}(a) = \frac{4n}{\rho^{-1 / 2}-1} \exp \left(-\frac{(1-\rho^{1 / 4})^{2}}{2n \| \tilde{W} \| \| P \| \| P^{-1} \| } a^{2}\right),
        \end{equation*}
        which decays {\bf super-exponentially} w.r.t. the threshold $M$.
    \end{enumerate}
    \label{thm:moment_prob}
\end{theorem}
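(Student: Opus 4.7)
The plan is to decompose the trajectory into \emph{super-steps}, where each super-step is either a single $K_1$ application or a complete $t$-step $K_0$ block. By~\eqref{eq:common_lyap}, every super-step transition (either $A+BK_1$ or $(A+BK_0)^t$) contracts by $\sqrt{\rho}$ in the $P$-norm, and the cumulative noise $v_j$ generated during super-step $j$ is a zero-mean Gaussian with covariance $\preceq \tilde{W}$ (since $W \preceq \tilde{W}$ handles the $K_1$ case, and the truncated sum defining the $K_0$ case is dominated by $\tilde{W}$). At any time $j$ with $\pi_j = 0$ (fallback counter zero), $x_j = \sum_{i=1}^{N_j}\Psi_i v_i$ with $\Psi_i$ a product of subsequent super-step transitions satisfying $\|\Psi_i v\|_P \leq \rho^{(N_j-i)/2}\|v\|_P$; if $\pi_j > 0$, one writes additionally $x_j = (A+BK_0)^p x_T + \tilde{v}_p$ where $T < j$ is the preceding super-step boundary, $p = t - \pi_j < t$, and $\tilde{v}_p$ is Gaussian with covariance $\preceq \tilde{W}$.

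For statement~(2), I apply the union bound $\mathbb{P}(u_k \neq K_1 x_k) \leq \sum_{s=0}^{t-1}\mathbb{P}(\text{trigger at } k-s)$, where a trigger at time $k-s$ requires both $\pi_{k-s} = 0$ and $\|(K_1-K_0)x_{k-s}\| \geq M$; the latter is implied by $\|x_{k-s}\|_P \geq M/(\mathcal{K}\|P^{-1}\|^{1/2})$. Writing $v_i = \Sigma_i^{1/2}g_i$ with $g_i \sim \mathcal{N}(0, I_n)$ and applying a coordinate-wise Gaussian tail with a union bound over the $n$ coordinates yields the marginal $\mathbb{P}(\|v_i\|_P \geq a) \leq 2n\exp(-a^2/(2n\|P\|\|\tilde{W}\|))$. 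Invoking Theorem~\ref{thm:exp_weight_sum} with $\varrho = \rho^{1/2}$, $C_1 = 2n$, and $C_2 = 1/(2n\|P\|\|\tilde{W}\|)$ transfers this marginal bound to the weighted sum $\sum_{i=1}^{N_j}\rho^{(N_j-i)/2}\|v_i\|_P$; the stated hypothesis $M \geq a_0\mathcal{K}$ is precisely the threshold condition on $a$ required there, and substitution reproduces $\mathcal{E}(M/\mathcal{K})$.

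For statement~(1), the two $\pi_k$ cases are combined via $(a+b)^4 \leq 8(a^4+b^4)$ to get $\mathbb{E}\|x_k\|_{P_0}^4 \leq 8(\|P_0\|_P^2\,\mathbb{E}\|x_T\|_P^4 + \mathbb{E}\|\tilde{v}_p\|_{P_0}^4)$, where $\|(A+BK_0)^p x_T\|_{P_0}^2 \leq \rho_0^p\|x_T\|_{P_0}^2 \leq \|x_T\|_{P_0}^2$ by~\eqref{eq:P0}. The partial-block term uses the Gaussian identity $\mathbb{E}(y^T P_0 y)^2 = 2\tr((P_0\Sigma)^2) + \tr(P_0\Sigma)^2$ with the inequalities $\tr(P_0\Sigma) \leq n\|P_0\|_{\tilde{W}^{-1}}$ and $\tr((P_0\Sigma)^2) \leq n\|P_0\|_{\tilde{W}^{-1}}^2$ for $\Sigma \preceq \tilde{W}$ (obtained by writing $\Sigma = \tilde{W}^{1/2}M\tilde{W}^{1/2}$ with $M \preceq I$), producing the $8(n^2+2n)\|P_0\|_{\tilde{W}^{-1}}^2$ contribution. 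The boundary term is handled analogously by conditioning on super-step types: then $x_T$ is zero-mean Gaussian with covariance $\Sigma_T = \sum_i \Psi_i\Sigma_i\Psi_i^T$, and the same Gaussian identity combined with $\Psi_i^T P\Psi_i \preceq \rho^{N-i}P$ (by induction from~\eqref{eq:common_lyap}) and the recursion $\Sigma_T^{(j)} = G_j\Sigma_T^{(j-1)}G_j^T + \Sigma_j$ lets one bound both $\tr(P\Sigma_T)$ and $\tr((P\Sigma_T)^2)$ by summing geometric series in $\rho$, giving $\mathcal{Q}$.

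The main obstacle is that $N_j$ is itself a random stopping time rather than a fixed index, so Theorem~\ref{thm:exp_weight_sum} cannot be applied naively to $S_{N_j}$. This is resolved by observing that the theorem's marginal-tail hypothesis on each $\|v_i\|_P$ holds unconditionally—because the covariance bound $\Sigma_i \preceq \tilde{W}$ is independent of which super-step type is realized and of the past switching history—and the theorem imposes no restriction on the joint distribution of the $X_i$'s; since the right-hand side $\tilde{C}_1\exp(-\tilde{C}_2 a^2)$ does not depend on the truncation index, the bound survives marginalization over the value of $N_j$ via zero-padding of $v_i$ beyond the current super-step.
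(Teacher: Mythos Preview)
Your super-step decomposition, the union bound for statement~(2), the invocation of Theorem~\ref{thm:exp_weight_sum} with $\varrho=\rho^{1/2}$, and the splitting $\|x_k\|_{P_0}\le\|x_T\|_{P_0}+\|\tilde v_p\|_{P_0}$ followed by $(a+b)^4\le 8(a^4+b^4)$ all match the paper's argument. Your discussion of the random stopping index $N_j$ is also correct and is exactly why Theorem~\ref{thm:exp_weight_sum} was stated without any hypothesis on the joint law of the $X_i$.

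There is, however, a genuine gap in your treatment of $\mathbb{E}\|x_T\|_P^4$. You write that ``conditioning on super-step types, $x_T$ is zero-mean Gaussian with covariance $\Sigma_T=\sum_i\Psi_i\Sigma_i\Psi_i^T$'' and then apply the Gaussian fourth-moment identity. This is false: the super-step types are determined by the switching rule $\|(K_1-K_0)\tilde x_j\|\gtrless M$, which is a function of the very noise increments you are treating as Gaussian. Conditioning on the realized sequence $\{\tilde A_j\}$ therefore conditions on events of the form $\{\|(K_1-K_0)\tilde x_j\|<M\}$ or its complement, and the conditional law of $\tilde x_T$ is a \emph{truncated} (not centered, not Gaussian) distribution. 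The identity $\mathbb{E}(y^TPy)^2=2\tr((P\Sigma)^2)+\tr(P\Sigma)^2$ then does not apply, and bounding $\tr(P\Sigma_T)$ and $\tr((P\Sigma_T)^2)$ via the recursion on $\Sigma_T^{(j)}$ does not control the true fourth moment.

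The paper avoids this by never claiming $\tilde x_j$ is Gaussian. Instead it runs a one-step recursion on the scalar $\tilde V_j=\|\tilde x_j\|_P^2$: from $\tilde V_{j+1}\le\rho\tilde V_j+\eta_j$ with $\eta_j=2\tilde w_j^TP\tilde A_j\tilde x_j+\tilde w_j^TP\tilde w_j$, it squares, takes expectations, and bounds the cross terms using only $\mathbb{E}[\tilde w_j\mid\tilde x_j]=0$, $\mathbb{E}[\tilde w_j\tilde w_j^T\mid\tilde x_j]\preceq\tilde W$, and $\mathbb{E}\|\tilde w_j\|_P^4\le(n^2+2n)\|P\|_{\tilde W^{-1}}^2$. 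These conditional moment facts hold regardless of which super-step type is realized, so no Gaussianity of $\tilde x_j$ is needed. The resulting linear recursion $\mathbb{E}\tilde V_{j+1}^2\le\rho^2\mathbb{E}\tilde V_j^2+\text{const}$ gives the bound $\mathcal{Q}$ directly. Your argument for the $\tilde v_p$ term is fine because there the conditioning is on $\mathcal F_{T}$, and the \emph{future} noise $w_T,\dots,w_{T+p-1}$ genuinely is conditionally Gaussian; it is only the attempt to push this back to $x_T$ itself that fails.
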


We are now ready to state the main theorem of this subsection:

\begin{theorem}
    With $\rho_0, P_0, \rho, P, a_0, \tilde{W}, \mathcal{K}, \mathcal{Q}, \mathcal{E}$ defined the same as in Theorem~\ref{thm:moment_prob}, assuming that the dwell time $t$ satisfies \eqref{eq:common_lyap} and the threshold $M \geq a_0 \mathcal{K}$, it holds that
    \begin{equation}
        J^{K_1,M,t} - J^{K_1} \leq 2 C_1C_2\mathcal{G}+(C_2^2+C_3) \mathcal{G}^2,
        \label{eq:gap}
    \end{equation}
    where
    \begin{align*}
        & \mathcal{G} = C_4 (t\mathcal{E}(M / \mathcal{K}))^{1/4},  \quad C_{1}=\sqrt{\tr(WP) \| Q_1 \|_P/ (1 - \rho)}, \\
        & C_2 = \| \Delta_1 \| \| Q_1 \|_{P_0}\sum_{s=0}^\infty \| A_1^s \|_{Q_1}, \quad C_3 = \| \Delta_2 \| \| P_0^{-1} \|, \\
        & C_4 = 2^{3/4}(\mathcal{Q} \| P_0 \|_P^2+(n^2 + 2n)\| P_0 \|_{\tilde{W}^{-1}}^2)^{1/4}, \\
        & {Q}_1 = Q + K_1^TRK_1, A_1 = A+BK_1, \\ & \Delta_1 = B(K_0 - K_1), \quad \Delta_2 = K_0^T R K_0 - K_1^T R K_1.
    \end{align*}
    \label{thm:gap}
\end{theorem}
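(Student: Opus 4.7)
The plan is to couple the switched trajectory $\tilde{x}_k$ with the ``ideal'' trajectory $x_k^\star$ of the plant driven by $u = K_1 x$ under the same noise realization $\{w_k\}$, and to bound the resulting per-stage discrepancy in the LQ cost. Letting $E_k = \{u_k = K_0 \tilde{x}_k\}$ denote the fall-back event, the closed-loop recursion $\tilde{x}_{k+1} = A_1 \tilde{x}_k + w_k + \mathbf{1}_{E_k} \Delta_1 \tilde{x}_k$ unrolls into
\begin{equation*}
\tilde{x}_k = x_k^\star + \epsilon_k, \qquad \epsilon_k = \sum_{s=0}^{k-1} A_1^{k-1-s} \Delta_1 \tilde{x}_s \mathbf{1}_{E_s}.
\end{equation*}
On $E_k^c$ the stage cost equals $\tilde{x}_k^T Q_1 \tilde{x}_k$, whereas on $E_k$ the substitution $u_k = K_0\tilde{x}_k$ contributes an extra $\tilde{x}_k^T \Delta_2 \tilde{x}_k$. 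Subtracting the ideal stage cost $x_k^{\star T} Q_1 x_k^\star$ and expanding the quadratic form gives a per-step gap of
\begin{equation*}
\mathbb{E}\bigl[2 x_k^{\star T} Q_1 \epsilon_k\bigr] + \mathbb{E}\| \epsilon_k \|_{Q_1}^2 + \mathbb{E}\bigl[\mathbf{1}_{E_k} \tilde{x}_k^T \Delta_2 \tilde{x}_k\bigr],
\end{equation*}
so it suffices to bound each of these three expectations uniformly in $k$ by the three terms appearing in~\eqref{eq:gap} and then Ces\`aro-average.

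The factor $\mathbb{E}\|x_k^\star\|_{Q_1}^2$ is controlled by $C_1^2$ using the common Lyapunov inequality from~\eqref{eq:common_lyap}: iterating $\tr(P\,\mathbb{E} x_{k+1}^\star x_{k+1}^{\star T}) \le \rho\,\tr(P\,\mathbb{E} x_k^\star x_k^{\star T}) + \tr(PW)$ gives $\tr(P\,\mathbb{E} x_k^\star x_k^{\star T}) \le \tr(PW)/(1-\rho)$, and $Q_1 \preceq \| Q_1 \|_P P$ closes the bound. For $\epsilon_k$, the $L^2$ triangle inequality applied to the series, together with the submultiplicativity of the induced norm $\|\cdot\|_{Q_1}$, reduces the problem to controlling $\mathbb{E}[\mathbf{1}_{E_s}\|\Delta_1 \tilde{x}_s\|_{Q_1}^2]$; a few routine norm manipulations then reduce this to $\mathbb{E}[\mathbf{1}_{E_s}\|\tilde{x}_s\|_{P_0}^2]$. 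A single Cauchy--Schwarz step splits this expectation into $(\mathbb{P}(E_s))^{1/2}(\mathbb{E}\|\tilde{x}_s\|_{P_0}^4)^{1/2}$, and these are precisely the two quantities estimated in Theorem~\ref{thm:moment_prob}, whose product is $C_4^2 (t\mathcal{E}(M/\mathcal{K}))^{1/2} = \mathcal{G}^2$. This yields $(\mathbb{E}\|\epsilon_k\|_{Q_1}^2)^{1/2} \le C_2 \mathcal{G}$, and thus $\mathbb{E}[x_k^{\star T} Q_1 \epsilon_k] \le C_1 C_2 \mathcal{G}$ by one more Cauchy--Schwarz. The indicator term follows the same recipe: $\tilde{x}_k^T \Delta_2 \tilde{x}_k \le \|\Delta_2\|\|P_0^{-1}\|\|\tilde{x}_k\|_{P_0}^2$ combined with the same Cauchy--Schwarz/Theorem~\ref{thm:moment_prob} step gives $\mathbb{E}[\mathbf{1}_{E_k}\tilde{x}_k^T\Delta_2\tilde{x}_k] \le C_3 \mathcal{G}^2$. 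Summing the three bounds, averaging in $k$, and taking $\limsup_T$ deliver~\eqref{eq:gap}.

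The main obstacle is that the switching event $E_s$ and the state $\tilde{x}_s$ are highly correlated random variables --- they are driven by the same noise history --- so no factorization is available. The workaround is to touch their joint behaviour only through Cauchy--Schwarz, applied once on the inner expectation and once through the $L^2$ triangle inequality on the series defining $\epsilon_k$, so that only the marginal fourth moment of $\tilde{x}_s$ and the switching probability $\mathbb{P}(E_s)$ are needed --- and these are precisely the outputs of Theorem~\ref{thm:moment_prob}. A secondary bookkeeping challenge is tracking the four matrix norms $\|\cdot\|$, $\|\cdot\|_P$, $\|\cdot\|_{P_0}$, and the induced $\|\cdot\|_{Q_1}$ so that the final constants collect cleanly into $C_1, C_2, C_3, C_4$.
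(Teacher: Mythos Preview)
Your proposal is correct and essentially coincides with the paper's proof: the same coupling $\tilde{x}_k = x_k^\star + \epsilon_k$, the same decomposition of the stage-cost gap into a $Q_1$-norm difference and the $\Delta_2$-indicator term, and the same Cauchy--Schwarz step $\mathbb{E}[\mathbf{1}_{E_s}\|\tilde{x}_s\|_{P_0}^2] \le (\mathbb{P}(E_s))^{1/2}(\mathbb{E}\|\tilde{x}_s\|_{P_0}^4)^{1/2} \le \mathcal{G}^2$ fed by Theorem~\ref{thm:moment_prob}. The only cosmetic difference is that you expand the quadratic $\|x_k^\star+\epsilon_k\|_{Q_1}^2$ exactly and then Cauchy--Schwarz the cross term, whereas the paper applies the triangle inequality to $\|x_k\|_{Q_1}$ and then the $L^2$ triangle inequality $\mathbb{E}(\sum X_i)^2 \le (\sum \sqrt{\mathbb{E}X_i^2})^2$; both routes land on the identical bound $2C_1C_2\mathcal{G} + (C_2^2+C_3)\mathcal{G}^2$.
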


\begin{proof}
    Let $\check{x}_0 = x_0$ and $\check x_{k+1}= A_1\check x_k + w_k$, then
    $$
    J^{K_1} = \lim_{T \to \infty}\frac1T\sum_{k = 0}^{T-1}\mathbb{E}\| \check{x}_k \|_{Q_1}^2.$$
    On the other hand, we have $$
    J^{K_1,M,t} = \limsup_{T \to \infty}\frac1T\sum_{k = 0}^{T-1}\mathbb{E}[x_k^T Q x_k + u_k^T R u_k],$$ and therefore we only need to prove $\mathbb{E}[x_k^T Q x_k + u_k^T R u_k] - \mathbb{E}\| \check{x}_k \|_{Q_1}^2$ is no greater than the RHS of~\eqref{eq:gap} for any $k$. Notice that
    \begin{equation*}
        x_k^T Q x_k + u_k^T R u_k - \|\check{x}_k \|_{Q_1}^2 = \| x_k \|_{Q_1}^2  - \|\check{x}_k \|_{Q_1}^2 + x_k^T \Delta_2 x_k \mathbf{1}_{F_k},
    \end{equation*}
    where $F_{k} = \{ u_k \neq K_1 x_k \}$ denotes the event that the fallback mode is active and the gain $K_1$ is not applied at step $k$. We will next bound $\mathbb{E}\| x_k \|_{Q_1}^2  - \mathbb{E}\|\check{x}_k \|_{Q_1}^2$ and $\mathbb{E}(x_k^T \Delta_2 x_k \mathbf{1}_{F_k})$ respectively.

    \paragraph{Bounding $\mathbb{E}\| x_k \|_{Q_1}^2  - \mathbb{E}\|\check{x}_k \|_{Q_1}^2$} \label{par:bound_norm_diff} Notice that 
    \begin{equation*}
    	x_k = A_1x_{k-1}+w_{k-1}+ \Delta_1 x_{k-1}\mathbf{1}_{F_{k-1}},
    \end{equation*}
    and by recursively applying this expansion, we get
    \begin{align*}
	    x_k & = A_1^{k} x_{0}+\sum_{s=0}^{k-1}A_1^{k-s-1} (w_{s} + \Delta_1 {x}_{s} \mathbf{1}_{F_{s}})\\
        & = \check x_k +\sum_{s=0}^{k-1}A_1^{k-s-1}  \Delta_1 {x}_{s} \mathbf{1}_{F_{s}}.
    \end{align*}
    Hence,
    \begin{equation*}
        \| x_k \|_{Q_1} \leq  \| \check{x}_k \|_{Q_1} + \| \Delta_1 \|_{Q_1} \sum_{s=0}^{k-1}\|A_1^{k-s-1}\|_{Q_1} \| {x}_{s} \|_{Q_1}\mathbf{1}_{F_{s}}.
    \end{equation*}
    From the fact that $\mathbb{E}(\sum_{i=1}^n X_i)^{2} \leq (\sum_{i=1}^n \sqrt{\mathbb{E}X_i^2})^2$ for any random variables $X_1,\ldots,X_n$, we have
    \begin{align*}
        & \mathbb{E}\| x_k \|_{Q_1}^2 \leq \left(\vphantom{\sum_1^2}\sqrt{\mathbb{E}\| \check{x}_k \|_{Q_1}^2}  + \right. \\&  \quad \left.\| \Delta_1 \|_{Q_1} \| Q_1 \|_{P_0} \sum_{s=0}^{k-1}\|A_1^{k-s-1}\| \sqrt{\mathbb{E}[ \| x_{s}\|_{P_0}^2 \mathbf{1}_{F_{s}} ]}\right)^2
    \end{align*}
    where by Cauchy-Schwarz inequality and Theorem~\ref{thm:moment_prob}, it holds $$\sqrt{\mathbb{E}\left[ \| x_{s}\|_{P_0}^2 \mathbf{1}_{F_{s}} \right]}  \leq (\mathbb{E}\| x_s \|_{P_0}^4)^{1/4}\mathbb{P}(F_s)^{1/4} \leq \mathcal{G}$$ for any $s$. Furthermore, we have $\mathbb{E}\| \check{x}_k \|_{Q_1}^2 \leq C_1^2$ guaranteed by~\eqref{eq:common_lyap}. Hence,
    \begin{equation*}
        \mathbb{E}\| x_k \|_{Q_1}^2  - \mathbb{E}\|\check{x}_k \|_{Q_1}^2 \leq 2 C_1 C_2 \mathcal{G} + C_2^2 \mathcal{G}^2.
    \end{equation*}
    
    \paragraph{Bounding $\mathbb{E}(x_k^T \Delta_2 x_k \mathbf{1}_{F_k})$} Notice $$x_k^T \Delta_2 x_k \mathbf{1}_{F_k} \leq \|\Delta_2\|\| x_k\|^2\mathbf{1}_{F_k} \leq C_3 \| x_k\|_{P_0}^2\mathbf{1}_{F_k}.$$ Following a similar argument to part (a), we can get $$\mathbb{E}(x_k^T \Delta_2 x_k \mathbf{1}_{F_k}) = C_3 \mathcal{G}^2.$$

    Combining the above two parts, we obtain the desired conclusion.
\end{proof}

The below corollary indicates that under proper choice of $t$, the performance loss caused by switching can decay super-exponentially $M$ is enlarged:

\begin{corollary}
    When $K_1$ is held constant, and $M, t$ are varied, it holds
    \begin{equation}
        J^{K_1,M,t} - J^{K_1} = O(t^{1/4}\exp(-cM^2))
        \label{eq:gap_bigO}
    \end{equation}
    as $M \rightarrow \infty, t \rightarrow \infty, t^{1/4} \exp \left(-c M^{2}\right) \rightarrow 0$, where $c = (1 - \rho^{1/4})^2 / (16 \| \tilde{W} \| \| P \| \| P^{-1} \|\| K_1 - K_0 \|^2)$ is a system-dependent constant.
    \label{cor:superexponential}
\end{corollary}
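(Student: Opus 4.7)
The plan is to read off the claimed asymptotic behavior directly from the non-asymptotic bound \eqref{eq:gap} of Theorem~\ref{thm:gap}. The key observation is that once $K_1$ is held fixed, every quantity appearing on the right-hand side of \eqref{eq:gap} other than $\mathcal{G}$ is a positive constant depending only on the (fixed) system data and on $K_0, K_1$: $\rho, \rho_0, P, P_0, \tilde W, \mathcal{K}, \mathcal{Q}, a_0$ together with the coefficients $C_1, C_2, C_3, C_4$ are all independent of $M$ and $t$. I would therefore abbreviate \eqref{eq:gap} as
\begin{equation*}
J^{K_1,M,t} - J^{K_1} \leq \alpha\,\mathcal{G} + \beta\,\mathcal{G}^2,
\end{equation*}
with positive constants $\alpha := 2C_1 C_2$ and $\beta := C_2^2 + C_3$.

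Next, the prescribed regime $t^{1/4}\exp(-cM^2)\to 0$ forces $\mathcal{G}\to 0$, and in particular $\beta\mathcal{G}^2 = o(\mathcal{G})$, so the quadratic term is subsumed by the linear one and $J^{K_1,M,t}-J^{K_1} = O(\mathcal{G})$. Substituting the explicit form of $\mathcal{E}$ from Theorem~\ref{thm:moment_prob} into $\mathcal{G} = C_4(t\mathcal{E}(M/\mathcal{K}))^{1/4}$ gives
\begin{equation*}
\mathcal{G} = C_4\left(\tfrac{4n}{\rho^{-1/2}-1}\right)^{1/4} t^{1/4}\exp\!\left(-\tfrac{(1-\rho^{1/4})^2}{8n\|\tilde W\|\|P\|\|P^{-1}\|\mathcal{K}^2}M^2\right).
\end{equation*}
Absorbing all of the constant prefactors into the hidden constant of the big-$O$ and reading off the $M^2$-coefficient in the exponent produces \eqref{eq:gap_bigO} with the advertised constant $c$.

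The only hypotheses of Theorem~\ref{thm:gap} that need checking are that $t$ satisfies \eqref{eq:common_lyap} and that $M \geq a_0\mathcal{K}$; both are automatic in the limit $M, t \to \infty$ since $a_0$ and $\mathcal{K}$ are fixed and \eqref{eq:common_lyap} is always arrangeable by enlarging $t$ (because $(A+BK_0)^t\to 0$). There is no real technical obstacle here: the corollary is essentially a mechanical asymptotic distillation of the explicit inequality of Theorem~\ref{thm:gap} combined with the super-exponential decay of $\mathcal{E}$ supplied by Theorem~\ref{thm:moment_prob}. The only point requiring a little care is the verification that the linear term in $\mathcal{G}$ dominates; once this is noted, extracting the constants is entirely routine.
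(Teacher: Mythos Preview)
Your approach is exactly the implicit argument the paper intends: the corollary is stated without proof precisely because it is a mechanical asymptotic reading of Theorem~\ref{thm:gap}, observing that $C_1,C_2,C_3,C_4$ are constants once $K_1$ is fixed and that $\mathcal{G}^2=o(\mathcal{G})$ in the prescribed regime.

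One small caveat worth flagging: the exponent you actually obtain from $\mathcal{G}=C_4(t\,\mathcal{E}(M/\mathcal{K}))^{1/4}$ is
\[
\frac{(1-\rho^{1/4})^2}{8n\,\|\tilde W\|\,\|P\|\,\|P^{-1}\|\,\mathcal{K}^2},
\]
which agrees with the stated $c$ only when $n=2$; for $n\geq 3$ it is strictly smaller, so your bound does not literally recover the advertised constant. This looks like a typo in the paper's expression for $c$ (an $8n$ became a $16$) rather than a flaw in your derivation, so you should not simply assert that ``reading off the coefficient produces the advertised $c$'' without comment.
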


\section{Extension to Noise Distributions With Bounded Fourth-Order Moments}
\label{sec:extension}

In this section, instead of Gaussian distributed noise, the theoretical results are extended to the case where the process noise $\{w_k \}$ is i.i.d. according to a distribution that is heavier-tailed with bounded fourth-order moment.

\begin{assumption}
    The process noise $\{w_k \}$ is i.i.d. with:
    \begin{equation*}
        \mathbb{E} w_k = 0, \quad \mathbb{E}(w_k w_k^T) = W, \quad \mathbb{E} \| w_k \|^4 = \mu_4.
    \end{equation*}
    \label{assumption:noise}
\end{assumption}

On one hand, Theorem~\ref{thm:bounded_cost} still holds since its proof only requires $\mathbb{E} w_k = 0$ and $\mathbb{E}(w_k w_k^T) = W$. On the other hand, Theorems~\ref{thm:moment_prob} and~\ref{thm:gap}, which rely on the sub-Gaussian tail, need to be adjusted. The following theorem parallels Theorem~\ref{thm:moment_prob}:

\begin{theorem}
    Assume that $P_0, \rho_0$ satisfy~\eqref{eq:P0} and~\eqref{eq:rho0}, that $\rho, P, t$ satisfy~\eqref{eq:common_lyap}, and that Assumption~\ref{assumption:noise} holds. Let
    $$\tilde{W} = \sum_{\tau=0}^\infty (A+BK_0)^\tau W((A+BK_0)^\tau)^T$$
    and
    $$\tilde{\mu}_4 = \frac{\| P_0 \|^2 \mu_4}{1 - \rho_0^2} + \frac{2\rho_0 \tr({W} P_0)}{(1 - \rho_0^2)(1 - \rho_0)}.$$
    If $\| K_1 - K_0 \| \leq \mathcal{K}$, then the following statements hold:
    \begin{enumerate}
        \item The fourth moments of $x_k$ is bounded:
        \begin{equation}
            \mathbb{E} \| x_k \|^4_{P_0} \leq 8(\tilde{\mathcal{Q}} \| P_0 \|_P^2+ \tilde{\mu}_4),
        \end{equation}
        where 
        \begin{equation*}
            \tilde{\mathcal{Q}} = \frac{6\rho (\tr(\tilde{W}P)^2) + (1 - \rho) \| P \|_{P_0}^2 \tilde{\mu}_4}{(1 - \rho)(1 - \rho^2)},
        \end{equation*}
        \item The probability of not using feedback gain $K_1$ satisfies:
        \begin{equation*}
            \mathbb{P}(u_k \neq K_1 x_k) \leq t \mathcal{P}(M / \mathcal{K}),
        \end{equation*}
        where
        \begin{equation*}
            \mathcal{P}(a) = \frac{\|P \|_{P_0}^2 \tilde{\mu}_4}{(1 - \rho^{1/4})^4(1 - \rho)a^4},
        \end{equation*}
        which decays polynomially w.r.t. the threshold $M$.
    \end{enumerate}
    \label{thm:moment_prob_heavytail}
\end{theorem}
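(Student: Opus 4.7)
The proof would parallel that of Theorem~\ref{thm:moment_prob}, replacing the sub-Gaussian concentration via Theorem~\ref{thm:exp_weight_sum} with Markov-type inequalities on the fourth moment, and would proceed in three main steps.

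\emph{Step 1 (Scalar fourth-moment Lyapunov recursion under fallback dynamics).} The switched dynamics may be rewritten as
\begin{equation*}
x_{k+1} = (A+BK_0)x_k + w_k + B(K_1-K_0)x_k\mathbf{1}_{\{u_k=K_1x_k\}},
\end{equation*}
so I would first analyze the pure noise-driven response $y_{k+1}=(A+BK_0)y_k+w_k$ with $y_0=0$. Expanding $\mathbb{E}\|y_{k+1}\|_{P_0}^4$ conditional on $y_k$ and using~\eqref{eq:rho0}, the independence of $w_k$ from $y_k$, and the bounds $\mathbb{E}\|w_k\|_{P_0}^2 \leq \tr(WP_0)$, $\mathbb{E}\|w_k\|_{P_0}^4 \leq \|P_0\|^2\mu_4$, I would derive a Lyapunov-like recursion of the form
\begin{equation*}
\mathbb{E}\|y_{k+1}\|_{P_0}^4 \leq \rho_0^2\,\mathbb{E}\|y_k\|_{P_0}^4 + \alpha\,\mathbb{E}\|y_k\|_{P_0}^2 + \beta,
\end{equation*}
which, combined with the second-moment bound $\mathbb{E}\|y_k\|_{P_0}^2 \leq \tr(WP_0)/(1-\rho_0)$ (as in Lemma~\ref{lemma:EV}), gives in steady state the closed-form constant $\tilde\mu_4$.

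\emph{Step 2 (Fourth-moment propagation through the common Lyapunov $P$).} Writing $x_k$ as an exponentially weighted sum of $w_\tau$ and the bounded perturbation $\eta_\tau = B(K_1-K_0)x_\tau\mathbf{1}_{\text{primary},\tau}$ and invoking~\eqref{eq:common_lyap}, I would bound $\mathbb{E}\|x_k\|_P^4$ by applying Minkowski's $L^4$-inequality to this sum: each ``super-step'' (one primary step or one $t$-step fallback block) contributes a contraction factor of at most $\rho^{1/4}$ in $L^4$, producing the $(1-\rho^{1/4})^{-4}$ factor appearing in $\mathcal{P}$, while the per-super-step driving term has fourth moment controlled by $\tilde\mu_4$ via Step~1. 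This would yield $\mathbb{E}\|x_k\|_P^4 \leq \tilde{\mathcal{Q}}$; converting to the $P_0$-norm via $\|v\|_{P_0}^4 \leq \|P_0\|_P^2\|v\|_P^4$ and combining with Step~1 via $(a+b)^4 \leq 8(a^4+b^4)$ then yields Part~(1).

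\emph{Step 3 (Switching probability via fourth-moment Markov).} The event $\{u_k\neq K_1x_k\}$ requires a threshold crossing $\|(K_1-K_0)x_s\|\geq M$ at some $s\in\{k-t+1,\ldots,k\}$. Union-bounding over these $t$ indices and using $\|(K_1-K_0)x_s\|\leq\mathcal{K}\|x_s\|$, it suffices to apply Markov's inequality to the $P$-norm fourth-moment bound of Step~2, which yields the polynomial bound $t\mathcal{P}(M/\mathcal{K})$.

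The main technical obstacle is Step~2: unlike the Gaussian case, where Theorem~\ref{thm:exp_weight_sum} delivers a clean sub-Gaussian tail for an exponentially weighted sum of possibly dependent random variables, the heavy-tail analogue must rely on Minkowski's $L^4$-inequality, whose convergence dictates the specific splitting of the contraction factor $\rho^{1/4}$ per super-step. Careful bookkeeping is needed to ensure the resulting fourth-moment geometric series converges, and to track $\|\eta_\tau\|$ back through the fallback contraction so as to obtain the $\tilde{\mu}_4$-dependent form of $\tilde{\mathcal{Q}}$.
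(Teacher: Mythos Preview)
Your high-level plan---replacing the sub-Gaussian tail in Theorem~\ref{thm:exp_weight_sum} by fourth-moment Markov bounds, and organizing the argument around (i) a fourth-moment bound on the accumulated fallback noise, (ii) propagation through the common Lyapunov function on the ``super-step'' sequence, (iii) a union bound over the $t$ possible switching instants---matches the paper's structure. There are however two places where your route and the paper's diverge, and one technical slip.

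For Part~(1), the paper does \emph{not} use Minkowski's $L^4$-inequality. It squares the scalar recursion $\tilde V_{j+1}\le \rho\tilde V_j+\eta_j$ (with $\tilde V_j=\|\tilde x_j\|_P^2$ on the transformed sequence $\{\tilde x_j\}$) and takes expectations, exploiting the conditional independence of $\tilde w_j$ from $\tilde x_j$ to kill the cross terms; this is what yields the specific form of $\tilde{\mathcal Q}$, with denominator $(1-\rho)(1-\rho^2)$ and the second-moment quantity $\tr(\tilde WP)$. Your Minkowski route gives a valid fourth-moment bound but with a different constant, so it would not recover $\tilde{\mathcal Q}$ as stated. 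Relatedly, your claim that each super-step contracts by $\rho^{1/4}$ in $L^4$ is incorrect: the pointwise bound $\|\tilde A_j x\|_P\le \rho^{1/2}\|x\|_P$ gives the same factor $\rho^{1/2}$ in any $L^p$, so Minkowski would produce $(1-\rho^{1/2})^{-4}$, not the $(1-\rho^{1/4})^{-4}$ appearing in~$\mathcal P$.

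For Part~(2), the paper does not apply Markov directly to $\mathbb E\|\tilde x_j\|_P^4$. Instead it uses $\|\tilde x_j\|_P\le\sum_s\rho^{(j-s-1)/2}\|\tilde w_s\|_P$, splits the threshold geometrically as $a\ge\sum_s\sigma^{j-s-1}(1-\sigma)a$ with $\sigma=\rho^{1/4}$, and applies fourth-moment Markov \emph{term-by-term} to each $\|\tilde w_s\|_P$; summing the resulting $\rho^{j-s-1}$ series is exactly what produces the $(1-\rho^{1/4})^{-4}(1-\rho)^{-1}$ combination in $\mathcal P(a)$. Your direct-Markov alternative also yields an $O(a^{-4})$ tail, but with a different constant. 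In short, your approach is a legitimate and somewhat more elementary alternative, but it proves a variant of the theorem rather than the exact bounds stated; to match $\tilde{\mathcal Q}$ and $\mathcal P$ precisely you need the paper's squared-recursion and geometric-splitting arguments.
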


The following theorem parallels Theorem~\ref{thm:gap}:

\begin{theorem}
    Under Assumption~\ref{assumption:noise}, with $\rho_0, P_0, \rho, P, \tilde{W}, \mathcal{K}, \tilde{\mu}_4$, $\tilde{\mathcal{Q}}, \mathcal{P}$ defined the same as in Theorem~\ref{thm:moment_prob_heavytail}, and $C_1, C_2, C_3$ defined the same as in Theorem~\ref{thm:gap}, assuming that the dwell time $t$ satisfies \eqref{eq:common_lyap}, it holds that
    \begin{equation*}
        J^{K_1,M,t} - J^{K_1} \leq 2 C_1C_2\tilde{\mathcal{G}}+(C_2^2+C_3) \tilde{\mathcal{G}}^2,
    \end{equation*}
    where
    \begin{equation*}
        \tilde{\mathcal{G}} = 2^{3/4}(\tilde{\mathcal{Q}} \| P_0 \|_P^2+\tilde{\mu}_4)^{1/4}(t\mathcal{P}(M / \mathcal{K}))^{1/4}.
    \end{equation*}
    \label{thm:gap_heavytail}
\end{theorem}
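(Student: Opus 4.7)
The plan is to follow the proof of Theorem~\ref{thm:gap} essentially line by line, substituting the heavy-tailed moment and probability bounds from Theorem~\ref{thm:moment_prob_heavytail} wherever the Gaussian versions were used. I first introduce the counterfactual trajectory $\check{x}_0 = x_0$, $\check{x}_{k+1} = A_1 \check{x}_k + w_k$ corresponding to pure $K_1$-feedback, so that $J^{K_1} = \lim_{T\to\infty}(1/T)\sum_{k=0}^{T-1}\mathbb{E}\|\check{x}_k\|_{Q_1}^2$; this only requires $A_1$ to be stable and $w_k$ to have finite second moment, both of which are available. Decomposing the per-step cost as in the Gaussian proof,
\begin{equation*}
    \mathbb{E}[x_k^T Q x_k + u_k^T R u_k] - \mathbb{E}\|\check{x}_k\|_{Q_1}^2 = (\mathbb{E}\|x_k\|_{Q_1}^2 - \mathbb{E}\|\check{x}_k\|_{Q_1}^2) + \mathbb{E}(x_k^T \Delta_2 x_k \mathbf{1}_{F_k}),
\end{equation*}
I treat the two summands separately.

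For the first summand, the recursion $x_k = \check{x}_k + \sum_{s=0}^{k-1} A_1^{k-s-1}\Delta_1 x_s \mathbf{1}_{F_s}$ is purely algebraic and carries over verbatim. After applying the triangle inequality in $\|\cdot\|_{Q_1}$, the auxiliary inequality $\mathbb{E}(\sum_i X_i)^2 \leq (\sum_i \sqrt{\mathbb{E} X_i^2})^2$, and Cauchy-Schwarz, the argument reduces to controlling the single quantity
\begin{equation*}
    \sqrt{\mathbb{E}[\|x_s\|_{P_0}^2 \mathbf{1}_{F_s}]} \;\leq\; (\mathbb{E}\|x_s\|_{P_0}^4)^{1/4}\,\mathbb{P}(F_s)^{1/4}.
\end{equation*}
This is exactly where the Gaussian-specific tools enter the original proof; I now invoke Theorem~\ref{thm:moment_prob_heavytail} to bound $(\mathbb{E}\|x_s\|_{P_0}^4)^{1/4} \leq (8(\tilde{\mathcal{Q}}\|P_0\|_P^2 + \tilde{\mu}_4))^{1/4}$ and $\mathbb{P}(F_s)^{1/4} \leq (t\mathcal{P}(M/\mathcal{K}))^{1/4}$. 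The product collapses to precisely $\tilde{\mathcal{G}}$. Combined with $\sqrt{\mathbb{E}\|\check{x}_k\|_{Q_1}^2} \leq C_1$ (obtained from \eqref{eq:common_lyap} and the second moment of $w_k$, which is insensitive to the tail), this produces $\mathbb{E}\|x_k\|_{Q_1}^2 - \mathbb{E}\|\check{x}_k\|_{Q_1}^2 \leq 2 C_1 C_2 \tilde{\mathcal{G}} + C_2^2 \tilde{\mathcal{G}}^2$.

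For the second summand, the bound $x_k^T \Delta_2 x_k \mathbf{1}_{F_k} \leq C_3 \|x_k\|_{P_0}^2 \mathbf{1}_{F_k}$ and another application of Cauchy-Schwarz combined with Theorem~\ref{thm:moment_prob_heavytail} give $\mathbb{E}(x_k^T \Delta_2 x_k \mathbf{1}_{F_k}) \leq C_3 \tilde{\mathcal{G}}^2$. Summing the two contributions yields the claimed inequality.

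There is no real obstacle here beyond careful bookkeeping: the Gaussian proof of Theorem~\ref{thm:gap} uses its moment/probability inputs in a single clean step, and Theorem~\ref{thm:moment_prob_heavytail} has been designed so that its outputs plug into exactly that step. The one point worth verifying is that no auxiliary inequality in the original derivation silently relied on the sub-Gaussian tail — in particular, the bound $\sqrt{\mathbb{E}\|\check{x}_k\|_{Q_1}^2} \leq C_1$, which depends only on $W = \mathbb{E}(w_k w_k^T)$, and the algebraic identity for $x_k - \check{x}_k$, which is distribution-free. Once these are checked, the heavy-tailed conclusion follows by substitution.
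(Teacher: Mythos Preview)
Your proposal is correct and takes essentially the same approach as the paper: the paper's proof of Theorem~\ref{thm:gap_heavytail} is a one-line remark that the argument of Theorem~\ref{thm:gap} carries over verbatim once the bound on $(\mathbb{E}\|x_s\|_{P_0}^4)^{1/4}\mathbb{P}(F_s)^{1/4}$ is replaced by $\tilde{\mathcal{G}}$ via Theorem~\ref{thm:moment_prob_heavytail}, which is exactly what you do. Your additional check that $C_1$ and the algebraic recursion for $x_k-\check{x}_k$ are distribution-free is implicit in the paper's brevity and is a reasonable point to make explicit.
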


\begin{proof}
    The proof parallels that of Theorem~\ref{thm:gap}, except that the bound on $(\mathbb{E}\| x_s \|_{P_0}^4)^{1/4}\mathbb{P}(F_s)^{1/4}$ should be $\tilde{\mathcal{G}}$ in place of $\mathcal{G}$.
\end{proof}

\begin{corollary}
    Under Assumption~\ref{assumption:noise}, when $K_1$ is held constant, and $M, t$ are varied, it holds
    \begin{equation}
        J^{K_1,M,t} - J^{K_1} = O(t^{1/4}/M)
        \label{eq:gap_bigO_heavytail}
    \end{equation}
    as $M \rightarrow \infty, t \rightarrow \infty, t^{1/4} / M \rightarrow 0$.
\end{corollary}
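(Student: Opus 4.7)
The plan is to derive this corollary as a direct asymptotic consequence of Theorem~\ref{thm:gap_heavytail}, by tracking how each factor in the bound scales with $M$ and $t$ while $K_1$ is held fixed. Since Theorem~\ref{thm:gap_heavytail} already provides the inequality $J^{K_1,M,t} - J^{K_1} \leq 2 C_1C_2\tilde{\mathcal{G}}+(C_2^2+C_3) \tilde{\mathcal{G}}^2$, the whole problem reduces to evaluating the asymptotic order of $\tilde{\mathcal{G}}$ and checking that the quadratic term does not dominate.

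First, I would verify that with $K_1$ fixed, every quantity appearing in $C_1, C_2, C_3$ and in the multiplicative prefactor of $\tilde{\mathcal{G}}$ is independent of $M$ and $t$. The fallback-side quantities $P_0, \rho_0, \tilde{W}, \tilde{\mu}_4$ come only from the known system data and $K_0$; the primary-side quantities $P, \rho, Q_1, A_1, \Delta_1, \Delta_2, \mathcal{K}$, and $\tilde{\mathcal{Q}}$ depend only on the system and $K_1$. Consequently $C_1, C_2, C_3$, and the prefactor $2^{3/4}(\tilde{\mathcal{Q}}\|P_0\|_P^2+\tilde{\mu}_4)^{1/4}$ are all $\Theta(1)$ in the asymptotic variables $M, t$.

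Next, I would plug the closed form of $\mathcal{P}$ from Theorem~\ref{thm:moment_prob_heavytail} into $\tilde{\mathcal{G}}$, so that
\begin{equation*}
\mathcal{P}(M/\mathcal{K}) \;=\; \frac{\|P\|_{P_0}^2\,\tilde{\mu}_4\,\mathcal{K}^4}{(1-\rho^{1/4})^4(1-\rho)\,M^4} \;=\; \Theta(M^{-4}),
\end{equation*}
and therefore $\tilde{\mathcal{G}} = \Theta\bigl((t\cdot M^{-4})^{1/4}\bigr) = \Theta(t^{1/4}/M)$. The hypothesis $t^{1/4}/M \to 0$ then implies $\tilde{\mathcal{G}} \to 0$, so the quadratic term $(C_2^2+C_3)\tilde{\mathcal{G}}^2$ is asymptotically negligible compared to the linear term $2C_1C_2\tilde{\mathcal{G}}$. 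Combining these observations yields $J^{K_1,M,t} - J^{K_1} = O(\tilde{\mathcal{G}}) = O(t^{1/4}/M)$, as claimed.

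There is no genuine mathematical obstacle here; the only care required is bookkeeping. The one point I would explicitly double-check is that, unlike the Gaussian case of Corollary~\ref{cor:superexponential} which required $M \geq a_0\mathcal{K}$, the heavy-tailed bound of Theorem~\ref{thm:gap_heavytail} only requires the dwell-time condition~\eqref{eq:common_lyap} and the trivial hypothesis $\|K_1-K_0\|\leq \mathcal{K}$ inherited from Theorem~\ref{thm:moment_prob_heavytail}, so the limit $M\to\infty$ can be taken freely without worrying about an admissibility threshold.
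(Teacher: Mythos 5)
Your proposal is correct and matches the argument the paper intends (the corollary is stated without a separate proof precisely because it follows from Theorem~\ref{thm:gap_heavytail} by the bookkeeping you describe): $\mathcal{P}(M/\mathcal{K})=\Theta(M^{-4})$ gives $\tilde{\mathcal{G}}=\Theta(t^{1/4}/M)$, the constants $C_1,C_2,C_3$ and the prefactor are independent of $M,t$ once $\rho,P$ are fixed from the first inequality of~\eqref{eq:common_lyap} (the second then holds for all sufficiently large $t$), and the quadratic term is negligible since $\tilde{\mathcal{G}}\to 0$. Your observation that no lower threshold on $M$ is needed in the heavy-tailed case is also accurate, since Lemma~\ref{lemma:escape_prob_heavytail} holds for all $a>0$ via Markov's inequality.
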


\section{Numerical Simulation}\label{sec:simulation}

This section demonstrates the safety guarantee and near-optimality of the proposed switching scheme by simulation on the Tennessee Eastman Process (TEP)~\cite{downs1993plant}, which is a commonly used process control system. In this simulation, we consider a simplified version of TEP similar to the one in~\cite{liu2020online} with full-state-feedback. The system has state dimension $n = 8$ and input dimension $m = 4$. The system is open-loop stable, and therefore the fallback controller is chosen as $K_0 = 0$. The LQ weight matrices are chosen as $Q=I,R=I$, and the process noise distribution is chosen to be $w_k \sim \mathcal{N}(0, I)$.

\subsection{Destabilizing $K_1$}

In this subsection, the primary feedback gain is chosen as $K_1 = K^* + 0.33 \mathbf{1}_m\mathbf{1}_n^T$, where $K^*$ is the optimal gain, such that $\rho(A+BK_1) \approx 1.01$. The trajectories of state norms with and without switching are compared in Fig.~\ref{fig:unstable_traj_compare}, from which it can be observed that the proposed switching strategy prevents the state from exploding exponentially.

\begin{figure}[!htbp]
    \begin{subfigure}{\columnwidth}
        \centering
        \includegraphics{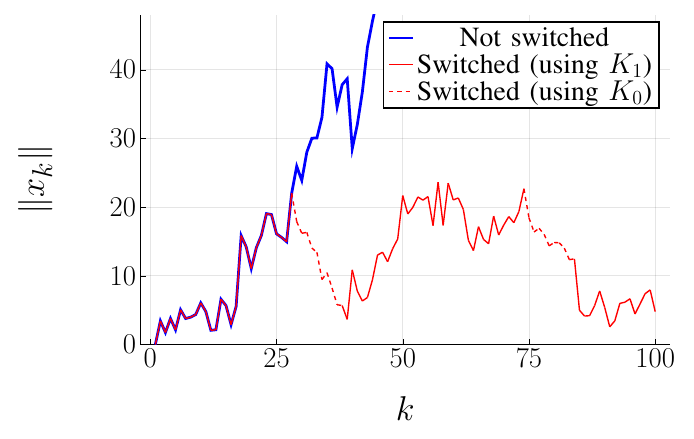}
        \caption{Destabilizing $K_1$}
        \label{fig:unstable_traj_compare}
    \end{subfigure}
    \begin{subfigure}{\columnwidth}
        \centering
        \includegraphics{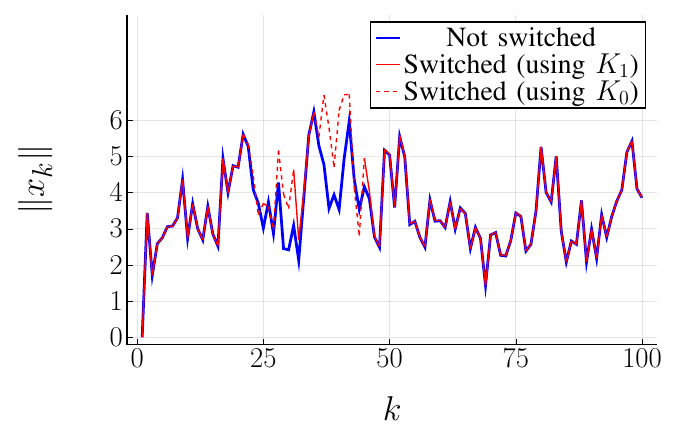}
        \caption{Stabilizing $K_1$}
        \label{fig:stable_traj_compare}
    \end{subfigure}
    \caption{Comparison of trajectories of state norms with and without switching, under the same realization of process noise. Parameters of the switching strategy are set to be $M = 1, t = 10$.}
\end{figure}

\subsection{Stabilizing $K_1$}

In this subsection, the primary feedback gain is chosen to be the optimal gain, i.e., $K_1 = K^*$. The trajectories of state norms with and without switching are compared in Fig.~\ref{fig:stable_traj_compare}. To quantify the relationship between the performance loss and the threshold $M$, we fix $K_1 = K^*$ and $t = 10$, and increase $M$ from $0.4$ to $3.1$. We evaluate the performance loss $J^{K^*,M,t} - J^*$ for each $M$, where $J^*$ is the optimal cost, by the empirical average of $10^5$ trajectories, each of which has a length of $10^3$. The empirical relative performance gap $(\hat{J}^{K^*,M,t} - J^*) / J^*$ against $M$ is plotted in a double-log plot in Fig.~\ref{fig:gap_wrt_M}. It can be observed that the performance gap converges to zero faster than a straight line (i.e., exponential convergence) as the switching threshold $M$ increases, which validates the super-exponential convergence property proved in Corollary~\ref{cor:superexponential}.

\begin{figure}[!htbp]
    \centering
    \includegraphics{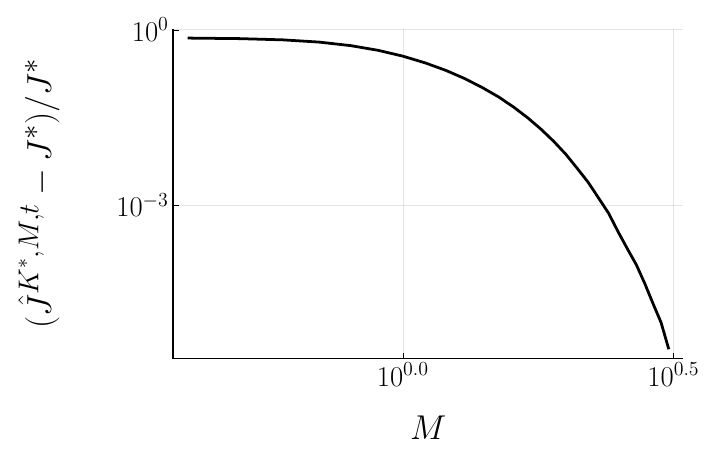}
    \caption{Double-log plot of relative performance gap against switching threshold $M$: super-exponential convergence to zero.}
    \label{fig:gap_wrt_M}
\end{figure}

\section{CONCLUSION}\label{sec:conclusion}

This paper introduces a plug-and-play switching strategy which enhances the safety of uncertified linear state-feedback controllers.
The strategy guarantees an upper bound on the LQ cost. Furthermore, the extra cost caused by switching as the switching threshold increases is quantified as decaying super-exponentially when the process noise is Gaussian, and decaying polynomially when the process noise obeys a heavy-tailed distribution with bounded fourth-order moments.
Future directions include extending the switching strategy with near-optimality guarantee to more general classes of systems.

\bibliographystyle{IEEEtran}
\bibliography{ref.bib}

\begin{thebibliography}{10}
\providecommand{\url}[1]{#1}
\csname url@samestyle\endcsname
\providecommand{\newblock}{\relax}
\providecommand{\bibinfo}[2]{#2}
\providecommand{\BIBentrySTDinterwordspacing}{\spaceskip=0pt\relax}
\providecommand{\BIBentryALTinterwordstretchfactor}{4}
\providecommand{\BIBentryALTinterwordspacing}{\spaceskip=\fontdimen2\font plus
\BIBentryALTinterwordstretchfactor\fontdimen3\font minus
  \fontdimen4\font\relax}
\providecommand{\BIBforeignlanguage}[2]{{%
\expandafter\ifx\csname l@#1\endcsname\relax
\typeout{** WARNING: IEEEtran.bst: No hyphenation pattern has been}%
\typeout{** loaded for the language `#1'. Using the pattern for}%
\typeout{** the default language instead.}%
\else
\language=\csname l@#1\endcsname
\fi
#2}}
\providecommand{\BIBdecl}{\relax}
\BIBdecl

\bibitem{aastrom2013adaptive}
K.~J. {\AA}str{\"o}m and B.~Wittenmark, \emph{Adaptive control}.\hskip 1em plus
  0.5em minus 0.4em\relax Courier Corporation, 2013.

\bibitem{bertsekas2019reinforcement}
D.~Bertsekas, \emph{Reinforcement learning and optimal control}.\hskip 1em plus
  0.5em minus 0.4em\relax Athena Scientific, 2019.

\bibitem{recht2019tour}
B.~Recht, ``A tour of reinforcement learning: The view from continuous
  control,'' \emph{Annual Review of Control, Robotics, and Autonomous Systems},
  vol.~2, pp. 253--279, 2019.

\bibitem{de2021low}
C.~De~Persis and P.~Tesi, ``Low-complexity learning of linear quadratic
  regulators from noisy data,'' \emph{Automatica}, vol. 128, p. 109548, 2021.

\bibitem{byrnes1984adaptive}
C.~I. Byrnes and J.~C. Willems, ``Adaptive stabilization of multivariable
  linear systems,'' in \emph{The 23rd IEEE conference on decision and
  control}.\hskip 1em plus 0.5em minus 0.4em\relax IEEE, 1984, pp. 1574--1577.

\bibitem{faradonbeh2018finite}
M.~K.~S. Faradonbeh, A.~Tewari, and G.~Michailidis, ``Finite-time adaptive
  stabilization of linear systems,'' \emph{IEEE Transactions on Automatic
  Control}, vol.~64, no.~8, pp. 3498--3505, 2018.

\bibitem{arxiv_version}
Y.~Lu and Y.~Mo, ``Ensuring the safety of uncertified linear state-feedback
  controllers via switching,'' \emph{arXiv preprint arXiv:2205.08817}, 2022.

\bibitem{cheng2005stabilization}
D.~Cheng, L.~Guo, Y.~Lin, and Y.~Wang, ``Stabilization of switched linear
  systems,'' \emph{IEEE transactions on automatic control}, vol.~50, no.~5, pp.
  661--666, 2005.

\bibitem{sun2005analysis}
Z.~Sun and S.~S. Ge, ``Analysis and synthesis of switched linear control
  systems,'' \emph{Automatica}, vol.~41, no.~2, pp. 181--195, 2005.

\bibitem{zhang2010asynchronously}
L.~Zhang and H.~Gao, ``Asynchronously switched control of switched linear
  systems with average dwell time,'' \emph{Automatica}, vol.~46, no.~5, pp.
  953--958, 2010.

\bibitem{prieur2001uniting}
C.~Prieur, ``Uniting local and global controllers with robustness to vanishing
  noise,'' \emph{Mathematics of Control, Signals and Systems}, vol.~14, no.~2,
  pp. 143--172, 2001.

\bibitem{el2005output}
N.~H. El-Farra, P.~Mhaskar, and P.~D. Christofides, ``Output feedback control
  of switched nonlinear systems using multiple lyapunov functions,''
  \emph{Systems \& Control Letters}, vol.~54, no.~12, pp. 1163--1182, 2005.

\bibitem{battistelli2012supervisory}
G.~Battistelli, J.~Hespanha, and P.~Tesi, ``Supervisory control of switched
  nonlinear systems,'' \emph{International Journal of Adaptive Control and
  Signal Processing}, vol.~26, no.~8, pp. 723--738, 2012.

\bibitem{wintz22global}
P.~Wintz, R.~Sanfelice, and J.~Hespanha, ``Global asymptotic stability of
  nonlinear systems while exploiting properties of uncertified feedback
  controllers via opportunistic switching,'' in \emph{2022 American Control
  Conference (ACC)}.\hskip 1em plus 0.5em minus 0.4em\relax IEEE, 2022.

\bibitem{abbasi2011regret}
Y.~Abbasi-Yadkori and C.~Szepesv{\'a}ri, ``Regret bounds for the adaptive
  control of linear quadratic systems,'' in \emph{Proceedings of the 24th
  Annual Conference on Learning Theory}.\hskip 1em plus 0.5em minus 0.4em\relax
  JMLR Workshop and Conference Proceedings, 2011, pp. 1--26.

\bibitem{cohen2019learning}
A.~Cohen, T.~Koren, and Y.~Mansour, ``Learning linear-quadratic regulators
  efficiently with only $\sqrt t$ regret,'' in \emph{International Conference
  on Machine Learning}.\hskip 1em plus 0.5em minus 0.4em\relax PMLR, 2019, pp.
  1300--1309.

\bibitem{abeille2018improved}
M.~Abeille and A.~Lazaric, ``Improved regret bounds for thompson sampling in
  linear quadratic control problems,'' in \emph{International Conference on
  Machine Learning}.\hskip 1em plus 0.5em minus 0.4em\relax PMLR, 2018, pp.
  1--9.

\bibitem{fazel2018global}
M.~Fazel, R.~Ge, S.~Kakade, and M.~Mesbahi, ``Global convergence of policy
  gradient methods for the linear quadratic regulator,'' in \emph{International
  Conference on Machine Learning}.\hskip 1em plus 0.5em minus 0.4em\relax PMLR,
  2018, pp. 1467--1476.

\bibitem{dean2018regret}
S.~Dean, H.~Mania, N.~Matni, B.~Recht, and S.~Tu, ``Regret bounds for robust
  adaptive control of the linear quadratic regulator,'' \emph{Advances in
  Neural Information Processing Systems}, vol.~31, 2018.

\bibitem{mania2019certainty}
H.~Mania, S.~Tu, and B.~Recht, ``Certainty equivalence is efficient for linear
  quadratic control,'' \emph{Advances in Neural Information Processing
  Systems}, vol.~32, 2019.

\bibitem{faradonbeh2020optimism}
M.~K.~S. Faradonbeh, A.~Tewari, and G.~Michailidis, ``Optimism-based adaptive
  regulation of linear-quadratic systems,'' \emph{IEEE Transactions on
  Automatic Control}, vol.~66, no.~4, pp. 1802--1808, 2020.

\bibitem{faradonbeh2020adaptive}
------, ``On adaptive linear--quadratic regulators,'' \emph{Automatica}, vol.
  117, p. 108982, 2020.

\bibitem{simchowitz2020naive}
M.~Simchowitz and D.~Foster, ``Naive exploration is optimal for online lqr,''
  in \emph{International Conference on Machine Learning}.\hskip 1em plus 0.5em
  minus 0.4em\relax PMLR, 2020, pp. 8937--8948.

\bibitem{wang2021exact}
F.~Wang and L.~Janson, ``Exact asymptotics for linear quadratic adaptive
  control,'' \emph{Journal of Machine Learning Research}, vol.~22, no. 265, pp.
  1--112, 2021.

\bibitem{gokcek2000slqr}
C.~Gokcek, P.~Kabamba, and S.~Meerkov, ``Slqr/slqg: an lqr/lqg theory for
  systems with saturating actuators,'' in \emph{Proceedings of the 39th IEEE
  Conference on Decision and Control (Cat. No. 00CH37187)}, vol.~4.\hskip 1em
  plus 0.5em minus 0.4em\relax IEEE, 2000, pp. 3236--3241.

\bibitem{gokcek2001lqr}
------, ``An lqr/lqg theory for systems with saturating actuators,'' \emph{IEEE
  Transactions on Automatic Control}, vol.~46, no.~10, pp. 1529--1542, 2001.

\bibitem{ossareh2016lqr}
H.~R. Ossareh, ``An lqr theory for systems with asymmetric saturating
  actuators,'' in \emph{2016 American Control Conference (ACC)}.\hskip 1em plus
  0.5em minus 0.4em\relax IEEE, 2016, pp. 6941--6946.

\bibitem{downs1993plant}
J.~J. Downs and E.~F. Vogel, ``A plant-wide industrial process control
  problem,'' \emph{Computers \& chemical engineering}, vol.~17, no.~3, pp.
  245--255, 1993.

\bibitem{liu2020online}
H.~Liu, Y.~Mo, J.~Yan, L.~Xie, and K.~H. Johansson, ``An online approach to
  physical watermark design,'' \emph{IEEE Transactions on Automatic Control},
  vol.~65, no.~9, pp. 3895--3902, 2020.

\bibitem{ledoux1991probability}
M.~Ledoux and M.~Talagrand, \emph{Probability in Banach Spaces: isoperimetry
  and processes}.\hskip 1em plus 0.5em minus 0.4em\relax Springer Science \&
  Business Media, 1991, vol.~23.

\end{thebibliography}

\appendices
\section{Proof of Lemma~\ref{lemma:EV}}

\begin{proof}
    From the switching strategy, it holds
    \begin{equation*}
        x_{k+1} = A_0 x_k + d_k + w_k,
    \end{equation*}
    where $A_0 = A + BK_0$, and $d_k = B(K_1-K_0) x_k \mathbf{1}_{\{u_k = K_1 x_k\}}$ satisfies $\| d_k \| \leq M \| B \|$, and hence $\| d_k \|_{P_0} \leq M \| B \| \| P_0 \|^{1/2}$. Therefore, it holds
    \begin{align}
        & V_{0,k+1} = \| x_{k+1} \|_{P_0}^2 \leq (\| A_0 x_k \|_{P_0} + \| d_k + w_k \|_{P_0})^2  \nonumber \\
        &= (1+\sigma) \| A_0 x_k \|_{P_0}^2 + (1+\sigma^{-1}) \| d_k + w_k \|_{P_0}^2, \nonumber \\
        & \leq (1+\sigma)\rho_0 V_{0,k} + (1+\sigma^{-1}) \| d_k + w_k \|_{P_0}^2
        \label{eq:Vk1}
    \end{align}
    where $\sigma = (\rho_0^{-1} - 1) / 2$, and the last inequality follows from~\eqref{eq:P0}. Notice
    \begin{align*}
        & \mathbb{E} \| d_k + w_k \|_{P_0}^2 \leq \mathbb{E}(\| d_k \|_{P_0} + \| w_k \|_{P_0})^2 \\
        \leq & 2(\mathbb{E} \| d_k \|_{P_0}^2 +\mathbb{E}\| w_k \|_{P_0}^2) \leq 2(M \| B \|^2 \|P_0 \| + \tr(WP_0))=: \mathcal{B},
    \end{align*}
    where the last inequality follows from the fact that $\mathbb{E}\| w_k \|_{P_0}^2 = \tr(\mathbb{E}(w_kw_k^T)P_0) = \tr(WP_0)$. Therefore, it follows from~\eqref{eq:Vk1} that
    \begin{equation*}
        \mathbb{E} V_{0,k+1} \leq (1+\sigma)\rho_0 \mathbb{E}V_{0,k} + (1 + \sigma^{-1})\mathcal{B} ,
    \end{equation*}
    and therefore by induction on $k$ it holds
    \begin{equation}
        \mathbb{E} V_{0,k} \leq \frac{(1 + \sigma^{-1})\mathcal{B}}{1 - (1+\sigma)\rho_0}.
        \label{eq:V0_bound}
    \end{equation}
    Substituting the expressions for $\sigma$ and $\mathcal{B}$ into~\eqref{eq:V0_bound} leads to the conclusion.
\end{proof}
\section{Proof of Theorem~\ref{thm:exp_weight_sum}}

\begin{proof}
    Let us choose $\sigma = \varrho^{1/2}$, then for any $a > 0$, considering the fact that $a = \sum_{i=0}^\infty \sigma^{i}(1-\sigma)a$, it holds
    \begin{align*}
	&\{ S_k \geq a \}  \subseteq \left\{ \sum_{i = 0}^k \varrho^{k - i} X_i \geq \sum_{i = 0}^k \sigma^{k - i} (1 - \sigma) a \right\} \\
	& \subseteq \bigcup_{i = 0}^k \{ \varrho^{k - i} X_i \geq \sigma^{k - i} (1 - \sigma) a \}  = \bigcup_{i = 0}^k \{  X_i \geq (\sigma / \varrho)^{k - i} (1 - \sigma) a \}
    \end{align*}
    and hence,
    $$
    \mathbb{P}(\| S_k \| \geq a) \leq C_1 \sum_{i = 0}^\infty \beta^{\alpha^i},
    $$
    where $\beta = \exp(-(1 - \sigma)^2 C_2 a^2 ) \in (0,1/2]$ since $a\geq 2C_2^{-1/2}(1-\sigma)^{-1}$, and $\alpha = (\sigma/\varrho)^2 = \rho^{-1} > 1$.  Next we only need to prove
    $$S:=\sum_{i = 0}^\infty \beta^{\alpha^i} \leq \frac{2\beta}{\min\{\alpha - 1, 1\}}.$$
    By Bernoulli's inequality, for $i \geq 1$,
    $$
    \alpha^i - 1 = (1 + \alpha - 1)^i - 1 \leq 1 + i(\alpha - 1) - 1 = i(\alpha - 1),
    $$
    and $\alpha^i - 1 \leq i(\alpha - 1)$ also holds for $i = 0$. Hence
    $$
    S = \beta \sum_{i=0}^\infty \beta^{\alpha^i - 1} \leq \beta\sum_{i=0}^\infty \beta^{i(\alpha - 1)} = \frac{\beta}{1 - \beta^{\alpha - 1}}.
    $$
    When $0 < \beta \leq 1/2$ and $1 < \alpha < 2$, by Bernoulli's inequality,
    $$
    \beta^{\alpha - 1} = (1 + \beta - 1)^{\alpha - 1} \leq 1 + (\alpha - 1)(\beta - 1),
    $$
    and hence,
    $$
    1 - \beta^{\alpha - 1} \geq (\alpha - 1)(1 - \beta) \leq (\alpha - 1) / 2,
    $$
    which implies $S \leq 2\beta / (\alpha - 1)$.
    Noticing that $S$ decreases monotonically as $\alpha$ increases, when $\alpha > 1$, we always have $S \leq 2\beta / \min\{ \alpha - 1, 1\}$.
\end{proof}

\section{Proof of Theorem~\ref{thm:moment_prob}}
\label{app:opt}

\subsection{Supporting lemmas}

By combining the $t$ consecutive steps of applying the fallback gain into a single step, we can transform the original system into a new linear time-varying system which is stable with a common Lyapunov function defined by $\rho, P$ in~\eqref{eq:common_lyap}. To be specific, denote the state sequence of the transformed system by $\{ \tilde{x}_j = x_{i(j)} \}$, which is a subsequence of the state sequence $\{ x_k \}$ of the original closed-loop system, indexed by
\begin{align}
    i(0) = 0, i(j + 1) = \begin{cases}
        i(j) + 1 & u_{i(j)} = K_1 x_{i(j)}, \\
        i(j) + t & \text{otherwise}.
    \end{cases}
    \label{eq:subseq_notation}
\end{align}
It follows that the transformed system evolves as
\begin{equation}
    \tilde{x}_{j+1} = \tilde{A}_j \tilde{x}_j + \tilde{w}_j,
    \label{eq:transformed_sys}
\end{equation}
where $\tilde{A}_j, \tilde{w}_j$ are defined as:
\begin{align}
    & \tilde{A}_j = \begin{cases}
        A+BK_1 & u_{i(j)} = K_1 x_{i(j)}, \\
        (A+BK_0)^t, & \text{otherwise}, \\
    \end{cases} \label{eq:Atk}\\
    & \tilde{w}_j = \begin{cases}
        w_{i(j)} & u_{i(j)} = K_1 x_{i(j)}, \\
        \sum_{\tau = 1}^t (A+BK_0)^{t - \tau} w_{i(j) + \tau - 1} &  \text{otherwise}.
    \end{cases} \label{eq:wtk}
\end{align}

The next two lemmas are two properties of the transformed system that will pave the way for proving Theorem~\ref{thm:moment_prob}:

\begin{lemma}
    Let $\tilde{V}_j = \tilde{x}_j^T P \tilde{x}_j$, then it holds for any $j$ that
    \begin{equation}
        \mathbb{E} \tilde{V}_{j}^{2}\leq \mathcal{Q},
        \label{eq:EV2}
    \end{equation}
    where $\mathcal{Q}$ is defined in Theorem~\ref{thm:moment_prob}.
    \label{lemma:EV2}
\end{lemma}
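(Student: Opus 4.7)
The plan is to work entirely within the transformed system $\tilde x_{j+1}=\tilde A_j\tilde x_j+\tilde w_j$ and exploit two key facts: (i) conditional on the filtration $\mathcal{F}_j$ generated by $\tilde x_0,\ldots,\tilde x_j$ (and the switching decisions already made), $\tilde A_j$ is deterministic and $\tilde w_j$ is zero-mean Gaussian with conditional covariance $\tilde W_j$ bounded by $\tilde W$ in the PSD order (in the normal-step case $\tilde W_j=W\preceq\tilde W$; in the fallback case $\tilde W_j=\sum_{\tau=1}^t(A+BK_0)^{t-\tau}W((A+BK_0)^{t-\tau})^T\preceq\tilde W$); (ii) the common Lyapunov bound $\tilde A_j^T P\tilde A_j\preceq\rho P$ from~\eqref{eq:common_lyap}.

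First I would establish the first-moment bound as a warm-up. Writing $\mathbb{E}[\tilde V_{j+1}\mid\mathcal{F}_j]=\tilde x_j^T\tilde A_j^T P\tilde A_j\tilde x_j+\tr(\tilde W_j P)\leq\rho\tilde V_j+\tr(\tilde W P)$, iterating from $\tilde V_0=0$ gives $\mathbb{E}\tilde V_j\leq\tr(\tilde W P)/(1-\rho)$. This will feed the cross term of the second-moment recursion.

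Next, the core step. Set $u=P^{1/2}\tilde A_j\tilde x_j$ and $v=P^{1/2}\tilde w_j$, so that $\tilde V_{j+1}=\|u+v\|^2=\|u\|^2+2\langle u,v\rangle+\|v\|^2\leq\rho\tilde V_j+2\langle u,v\rangle+\|v\|^2$. Squaring with $a=\rho\tilde V_j$, $b=2\langle u,v\rangle+\|v\|^2$ and taking conditional expectation, I use the Gaussian identities $\mathbb{E}[\langle u,v\rangle\mid\mathcal{F}_j]=0$, $\mathbb{E}[\langle u,v\rangle\|v\|^2\mid\mathcal{F}_j]=0$ (odd third moment), $\mathbb{E}[\|v\|^2\mid\mathcal{F}_j]=\tr(\tilde W_j P)\leq\tr(\tilde W P)$, $\mathbb{E}[\langle u,v\rangle^2\mid\mathcal{F}_j]=u^T P^{1/2}\tilde W_j P^{1/2}u\leq\kappa\|u\|^2\leq\kappa\rho\tilde V_j$ where $\kappa=\|P\|_{\tilde W^{-1}}=\|P^{1/2}\tilde W P^{1/2}\|$, and finally the quartic identity $\mathbb{E}[\|v\|^4\mid\mathcal{F}_j]=2\tr((\tilde W_j P)^2)+(\tr(\tilde W_j P))^2$. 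Diagonalising $A_j:=P^{1/2}\tilde W_j P^{1/2}\preceq\kappa I$ with eigenvalues $\lambda_i\in[0,\kappa]$, I bound $\tr(A_j^2)=\sum\lambda_i^2\leq n\kappa^2$ and $(\tr A_j)^2\leq n^2\kappa^2$, giving $\mathbb{E}[\|v\|^4\mid\mathcal{F}_j]\leq(n^2+2n)\kappa^2$. Combining,
\begin{equation*}
\mathbb{E}[\tilde V_{j+1}^2\mid\mathcal{F}_j]\leq\rho^2\tilde V_j^2+\bigl(2\rho\tr(\tilde W P)+4\rho\kappa\bigr)\tilde V_j+(n^2+2n)\kappa^2.
\end{equation*}

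Finally, I would take unconditional expectation, plug in $\mathbb{E}\tilde V_j\leq\tr(\tilde W P)/(1-\rho)$, and use the elementary inequality $\kappa\leq\tr(\tilde W P)$ (max eigenvalue is bounded by the sum of nonnegative eigenvalues) to collapse the middle terms into $6\rho(\tr(\tilde W P))^2/(1-\rho)$. The resulting recursion $\mathbb{E}\tilde V_{j+1}^2\leq\rho^2\mathbb{E}\tilde V_j^2+6\rho(\tr(\tilde W P))^2/(1-\rho)+(n^2+2n)\kappa^2$, starting from $\tilde V_0=0$, sums geometrically to exactly $\mathcal{Q}$. The main obstacle I anticipate is matching the prefactor $(n^2+2n)\kappa^2$ in the stated $\mathcal{Q}$: this forces me to apply the Gaussian quartic identity rather than the loose bound $\mathbb{E}[\|v\|^4\mid\mathcal{F}_j]\leq 3(\tr(\tilde W P))^2$, and to recognise the eigenvalue-based estimate $\tr(A_j^2)\leq n\|A_j\|^2$ that produces the dimensional factors.
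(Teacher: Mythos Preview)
Your proposal is correct and follows essentially the same route as the paper's proof: square the Lyapunov recursion $\tilde V_{j+1}\leq\rho\tilde V_j+\eta_j$, exploit the conditional Gaussianity of $\tilde w_j$ (zero mean, covariance $\preceq\tilde W$) to kill the odd cross terms and bound the second and fourth moments, plug in the first-moment bound $\mathbb E\tilde V_j\leq\tr(\tilde W P)/(1-\rho)$, and iterate geometrically to $\mathcal Q$. The only cosmetic differences are that the paper bounds the quadratic cross term via the trace $\tr(\tilde W P)$ directly (rather than via the operator norm $\kappa$ followed by $\kappa\leq\tr(\tilde W P)$), and handles $\mathbb E\|\tilde w_j\|_P^4$ through a $\chi^2(n)$ domination argument instead of the explicit identity $2\tr(\Sigma^2)+(\tr\Sigma)^2$; both variants land on the same $(n^2+2n)\|P\|_{\tilde W^{-1}}^2$ constant.
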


\begin{proof}
    From~\eqref{eq:common_lyap} and~\eqref{eq:Atk}, it follows that
    \begin{equation}
        \tilde{V}_{j+1} \leq \rho \tilde{V}_j + \eta_j,
        \label{eq:recursive_V}
    \end{equation}
    where $\eta_j = 2\tilde{w}_j^T P \tilde{A}_j \tilde{x}_j + \tilde{w}_j^T P \tilde{w}_j$. From $\mathbb{E}(\tilde{w}_j^T P \tilde{A}_j \tilde{x}_j) = \mathbb{E}(\mathbb{E}(\tilde{w}_j | \tilde{x}_j)^T P \tilde{A}_j \tilde{x}_j) = 0$, it follows that $\mathbb{E}\eta_j = \tr(\mathbb{E}(\tilde{w}_j\tilde{w}_j^T)P) \leq \tr(\tilde{W}P)$, and hence
    \begin{equation}
        \mathbb{E}\tilde{V}_j \leq \tr(\tilde{W}P) / (1 - \rho).
        \label{eq:EVk}
    \end{equation}
    To proceed, we square and take the expectations on both sides of~\eqref{eq:recursive_V}, and obtain
    \begin{equation}
        \mathbb{E} \tilde{V}_{j+1}^2 \leq \rho^2 \mathbb{E} \tilde{V}_j^2 + 2\rho \mathbb{E}(\tilde{V}_j\eta_j) + \mathbb{E}\eta_j^2.
        \label{eq:recursive_V2}
    \end{equation}
    \paragraph{Bound on $\mathbb{E}(\tilde{V}_j\eta_j)$}
    \begin{equation*}
        \mathbb{E}(\tilde{V}_j\eta_j) = 2\mathbb{E}(\tilde{w}_j^TP\tilde{A}_j\tilde{x}_j\tilde{V_j}) + \mathbb{E}(\tilde{w}_j^T P \tilde{w}_j\tilde{V}_j),
    \end{equation*}
    where:
    \begin{itemize}
        \item $\mathbb{E}(\tilde{w}_j^T P \tilde{A}_j \tilde{x}_j\tilde{V}_j) = \mathbb{E}(\mathbb{E}(\tilde{w}_j | \tilde{x}_j)^T P \tilde{A}_j \tilde{x}_j\tilde{V}_j) = 0$;
        \item $\mathbb{E}(\tilde{w}_j^T P \tilde{w}_j\tilde{V}_j) = \tr(\mathbb{E}(\tilde{V}_j)\mathbb{E}(\tilde{w}_j\tilde{w}_j^T)P) = \tr(\tilde{W}P)\cdot\mathbb{E}\tilde{V}_j$, since $\tilde{w}_j$ is independent of $\tilde{V}_j$.
    \end{itemize}
    Hence,
    \begin{equation}
        \mathbb{E}(\tilde{V}_j \eta_j) \leq \tr(\tilde{W}P)\mathbb{E} \tilde{V}_j .
        \label{eq:EVeta}
    \end{equation}
    \paragraph{Bound on $\mathbb{E} \eta_j^2$}
    \begin{align*}
        \mathbb{E}\eta_j^2 = & 4 \mathbb{E}(\tilde{x}_j^T \tilde{A}_j^T P \tilde{w}_j \tilde{w}_j^T P \tilde{A}_j \tilde{x}_j) + \\
        & 4\mathbb{E}(\tilde{w}_j^T P \tilde{A}_j \tilde{x}_j \tilde{w}_j^T P \tilde{w}_j) + \mathbb{E}(\tilde{w}_j^T P \tilde{w}_j \tilde{w}_j^T P \tilde{w}_j),
    \end{align*}
    where we can bound each term respectively as follows: \begin{itemize}
        \item $\mathbb{E}(\tilde{x}_j^T \tilde{A}_j^T P \tilde{w}_j \tilde{w}_j^T P \tilde{A}_j \tilde{x}_j) = \tr(\mathbb{E}(\tilde{w}_j \tilde{w}_j^T)P)\cdot \mathbb{E}(\tilde{x}_j^T \tilde{A}_j^T P \tilde{A}_j \allowbreak \tilde{x}_j) \leq \rho\tr(\tilde{W}P) \mathbb{E} \tilde{V}_j$ since $\tilde{w}_j$ is independent of $\tilde{x}_j$ and $\tilde{A}_j$;
        \item $\mathbb{E}(\tilde{w}_j^T P \tilde{A}_j \tilde{x}_j \tilde{w}_j^T P \tilde{w}_j) = \tr\{\mathbb{E}[\tilde{A}_j \tilde{x}_j \mathbb{E}( \tilde{w}_j^T P \tilde{w}_j \tilde{w}_j^T\mid x_j) ]\} = 0$ by symmetry;
        \item $\mathbb{E}(\tilde{w}_j^T P \tilde{w}_j \tilde{w}_j^T P \tilde{w}_j) = \mathbb{E}\| \tilde{w}_j \|_{P}^4 \leq \|P \|_{\tilde{W}^{-1}}^2 \cdot \mathbb{E}  \| \tilde{w}_j \|_{\tilde{W}^{-1}}^4 \leq \|P \|_{\tilde{W}^{-1}}^2 \mathbb{E} \nu^2 = (n^2 + 2n)\|P \|_{\tilde{W}^{-1}}^2$, where $\nu \sim \chi^2(n)$.
    \end{itemize}
    Hence,
    \begin{equation}
        \mathbb{E} \eta_j^2 \leq 4\rho\tr(\tilde{W}P)\mathbb{E}\tilde{V}_j + (n^2 + 2n) \| P \|_{\tilde{W}^{-1}}^2.
        \label{eq:Eeta2}
    \end{equation}

    The conclusion follows from substituting~\eqref{eq:EVeta},~\eqref{eq:Eeta2} and~\eqref{eq:EVk} into~\eqref{eq:EV2} and applying induction.
\end{proof}

\begin{lemma}
    For $a \geq a_0$, it holds for any $j$ that
    \begin{equation*}
        \mathbb{P}\left(\left\|\tilde{x}_{j}\right\| \geq a\right) \leq  \mathcal{E}(a),
    \end{equation*}
    where $a_0, \mathcal{E}(a)$ are defined in Theorem~\ref{thm:moment_prob}.
    \label{lemma:escape_prob}
\end{lemma}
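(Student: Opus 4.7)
The plan is to first use the common quadratic Lyapunov function encoded by $\rho, P$ in~\eqref{eq:common_lyap} to majorize $\|\tilde{x}_j\|_P$ by an exponentially weighted sum of noise $P$-norms, and then invoke Theorem~\ref{thm:exp_weight_sum} to obtain the claimed Gaussian-like tail. The fact that Theorem~\ref{thm:exp_weight_sum} imposes no joint-distribution restriction on its summands is what makes this route viable, because $\tilde{w}_j$'s are constructed from blocks of $\{w_k\}$ whose index ranges are determined by the switching history and are therefore not obviously independent across $j$.

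First, I would unfold the recursion~\eqref{eq:transformed_sys} in the $P$-norm. For every admissible $\tilde{A}_j \in \{A+BK_1,(A+BK_0)^t\}$, \eqref{eq:common_lyap} gives $\|\tilde{A}_j v\|_P \leq \rho^{1/2}\|v\|_P$, so the triangle inequality yields $\|\tilde{x}_{j+1}\|_P \leq \rho^{1/2}\|\tilde{x}_j\|_P + \|\tilde{w}_j\|_P$. Iterating from $\tilde{x}_0=0$ produces $\|\tilde{x}_j\|_P \leq \sum_{i=0}^{j-1}(\rho^{1/2})^{j-1-i}\|\tilde{w}_i\|_P$, and the elementary bound $\|v\| \leq \|P^{-1}\|^{1/2}\|v\|_P$ then shows that $\{\|\tilde{x}_j\|\geq a\}$ is contained in the event that the above weighted sum exceeds $a/\|P^{-1}\|^{1/2}$.

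Next, I would establish a sub-Gaussian tail for each $\|\tilde{w}_i\|_P$ uniformly in $i$. The index $i(j)$ in~\eqref{eq:subseq_notation} is a stopping time with respect to the process-noise filtration, and conditional on the past at time $i(j)$, the definition~\eqref{eq:wtk} makes $\tilde{w}_j$ a zero-mean Gaussian with conditional covariance dominated in the positive-semidefinite order by $\tilde{W}=\sum_{\tau=0}^\infty (A+BK_0)^\tau W((A+BK_0)^\tau)^T$ in both branches of the switching logic. Hence $\|\tilde{w}_j\|_P^2 \leq \|P\|\|\tilde{W}\|\,\|z_j\|^2$ for a conditionally standard Gaussian $z_j$, and a union bound over coordinates combined with the scalar Gaussian tail gives $\mathbb{P}(\|z_j\|\geq r) \leq 2n\exp(-r^2/(2n))$; averaging over the conditioning yields $\mathbb{P}(\|\tilde{w}_i\|_P\geq a)\leq 2n\exp(-a^2/(2n\|P\|\|\tilde{W}\|))$ for every $a>0$.

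Finally, I would apply Theorem~\ref{thm:exp_weight_sum} with $X_i = \|\tilde{w}_i\|_P$, $\varrho = \rho^{1/2}$, $C_1 = 2n$, and $C_2 = (2n\|P\|\|\tilde{W}\|)^{-1}$. The theorem's admissibility threshold $2C_2^{-1/2}(1-\rho^{1/4})^{-1}$, rescaled by the factor $\|P^{-1}\|^{1/2}$ from the first step, matches $a_0$ exactly, and the resulting $\tilde{C}_1, \tilde{C}_2$ coincide with the prefactor and exponent of $\mathcal{E}(a)$ after identifying $\|\tilde{W}\|\|P\|\|P^{-1}\|$ in the denominator of the exponent. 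The main hurdle is purely bookkeeping: keeping the change-of-norm factor $\|P^{-1}\|^{1/2}$ aligned with the threshold condition in Theorem~\ref{thm:exp_weight_sum} so that ``$a \geq a_0$'' is precisely the hypothesis needed to invoke the theorem, and verifying that the covariance of $\tilde{w}_j$ in the $t$-step branch is actually dominated by $\tilde{W}$ and not some larger quantity. Once these constants are tracked carefully the conclusion is immediate.
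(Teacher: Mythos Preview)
Your proposal is correct and follows essentially the same route as the paper: bound $\|\tilde{x}_j\|_P$ by an exponentially weighted sum of $\|\tilde{w}_i\|_P$ via the common Lyapunov contraction, establish a uniform Gaussian tail on each summand (the paper cites a concentration inequality from Ledoux--Talagrand rather than writing out your coordinate-wise union bound, but the resulting constants are identical), and then invoke Theorem~\ref{thm:exp_weight_sum} with $\varrho=\rho^{1/2}$ and convert back to the Euclidean norm via $\|P^{-1}\|^{1/2}$. The only bookkeeping detail you omit is that matching $\tilde{C}_1=2C_1/\min\{\varrho^{-1}-1,1\}$ to the prefactor $4n/(\rho^{-1/2}-1)$ of $\mathcal{E}$ requires $\rho^{-1/2}-1\leq 1$, which the paper handles by assuming without loss of generality that $\rho\in(1/4,1)$.
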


\begin{proof}
    Notice $\tilde{x}_j = \sum_{s=0}^{j-1}(\prod_{r=s+1}^{k-1} \tilde{A}_r) \tilde{w}_s$. From~\eqref{eq:common_lyap} and~\eqref{eq:Atk}, it follows that
    \begin{align*}
        \| \tilde{x}_j \|_P &= \left\| \sum_{s=0}^{j-1} P^{1/2}\left(\prod_{r=s+1}^{j-1} \tilde{A}_r\right) \tilde{w}_s \right\| \\
        & \leq \left\| \sum_{s=0}^{j-1} \rho^{(j-s-1)/2} P^{1/2}\tilde{w}_s \right\| \leq \sum_{s=0}^{j-1} \rho^{(j-s-1)/2} \| \tilde{w}_s \|_P.
    \end{align*}
    By~\eqref{eq:wtk}, it holds $\tilde{w}_s | \mathcal{F}_{s - 1}\sim \mathcal{N}(0, W_s)$, where $\mathcal{F}_{s - 1}$ is the $\sigma$-algebra generated by $\tilde{w}_0, \ldots, \tilde{w}_{s - 1}$, and $W_s \in \{ W, 
    \sum_{\tau=0}^{t-1}( A+B K_{0})^{\tau} W\left(\left(A+B K_{0}\right)^{\tau}\right)^{T}
     \}$; in either case, it holds $W_s \preceq \tilde{W}$. Hence, by a concentration bound on Gaussian random vectors~\cite[Lemma 3.1]{ledoux1991probability}, it holds for any $s$ and any $a > 0$ that
     $$
     \mathbb{P}(\| \tilde{w}_s \|_P \geq a) \leq 2n\exp(-a^2 / (2n \| \tilde{W} \| \| P \|)).
     $$
    Invoking Theorem~\ref{thm:exp_weight_sum} with $\varrho = \rho^{1/2}$, and assuming w.l.o.g. that $\rho \in (1/4, 1)$, it follows that
    \begin{equation*}
        \mathbb{P}(\| \tilde{x}_j \|_P \geq a) \leq \frac{4n}{\rho^{-1 / 2}-1} \exp \left(-\frac{(1-\rho^{1 / 4})^{2}}{2n \| \tilde{W} \| \| P \|} a^{2}\right)
    \end{equation*}
    for any $a\geq a_0$. 

    Meanwhile, it holds
    \begin{equation*}
        \{ \| \tilde{x}_j \| \geq a \} \subseteq \{ \| \tilde{x}_j \|_{P} \geq a \| P^{-1} \|^{-1/2} \},
    \end{equation*}
    from which the conclusion follows.
\end{proof}

\subsection{Proof of Theorem~\ref{thm:moment_prob}}
\label{sec:proof_moment_prob}

Now we are ready to prove Theorem~\ref{thm:moment_prob}, whose contents are restated below:
\begin{align}
    & \mathbb{E} \| x_k \|^4_{P_0} \leq 8(\mathcal{Q} \| P_0 \|_P^2+(n^2 + 2n)\| P_0 \|_{\tilde{W}^{-1}}^2), \label{eq:Ex4}\\
    & \mathbb{P}(u_k \neq K_1 x_k) \leq t \mathcal{E}(M / \mathcal{K}). \label{eq:Pu0}
\end{align}

\begin{proof}
    The proof is devoted to translating properties of the transformed state sequence $\{ \tilde{x}_j \}$ back into properties of the original state sequence $\{ x_k \}$. In what follows we shall prove~\eqref{eq:Ex4} and~\eqref{eq:Pu0} respectively.

    \paragraph{Proof of~\eqref{eq:Ex4}} Let $j = \sup\{ s \in \mathbb{N} \mid i(s) \leq k \}$, i.e., $\tilde{x}_j$ is the last state in the transformed state sequence that occurs no later than $x_k$. Consequently,
    \begin{equation*}
        x_k = (A+BK_0)^{k - i(j)} \tilde{x}_j + \tilde{w}_j.
    \end{equation*}
    From~\eqref{eq:P0}, it follows that
    \begin{equation*}
        \| x_k \|_{P_0} \leq \rho^{(k - i(j))/2}\| \tilde{x}_j \|_{P_0} + \| \tilde{w}_j \|_{P_0} \leq \| \tilde{x}_j \|_{P_0} + \| \tilde{w}_j \|_{P_0}.
    \end{equation*}
    Hence, applying the power means inequality $((a + b) / 2)^4 \leq (a^4 + b^4) / 2$, and taking the expectation on both sides, we have
    \begin{equation*}
        \mathbb{E}\| x_k \|_{P_0}^4 \leq 8(\mathbb{E}\| \tilde{x}_j \|_{P_0}^4 + \mathbb{E}\| \tilde{w}_j \|_{P_0}^4),
        \label{eq:xP04}
    \end{equation*}
    where:
    \begin{itemize}
        \item $\mathbb{E} \| \tilde{x}_j \|_P^4 \leq \mathcal{Q}$ by Lemma~\ref{lemma:EV2}, and hence $\mathbb{E} \| \tilde{x}_j \|_{P_0}^4 \leq \mathcal{Q} \| P_0 \|_P^2$;
        \item $\mathbb{E}\| \tilde{w}_j \|_{P_0}^4 \leq \|P_0\|_{\tilde{W}^{-1}}^2 \mathbb{E}\| \tilde{w}_j \|_{\tilde{W}^{-1}}^4 \leq \|P_0\|_{\tilde{W}^{-1}}^2 \mathbb{E}\nu^2 = (n^2 + 2n)\|P_0\|_{\tilde{W}^{-1}}^2$, where $\nu \sim \chi^2(n)$.
    \end{itemize}
    Combining the above two items leads to the conclusion.

    \paragraph{Proof of~\eqref{eq:Pu0}} Let $I = \{ k \in \mathbb{N} | \exists j\in \mathbb{N} \text{ s.t. } i(j) = k\}$, i.e., $I$ is the index set for states that occur in the transformed state sequence.
    A sufficient and necessary condition for $u_k \neq K_1 x_k$ is that exactly one of $x_k,x_{k-1}, \ldots, x_{k-t+1}$ belongs to the transformed state sequence and triggers the switching rule, and hence,
    \begin{equation*}
        \{ u_k \neq K_1 x_k \} \subseteq \bigcup_{\tau = 0}^{t - 1} \{ \| (K_1 - K_0) x_{k - \tau} \|\geq M, k - \tau \in I \}.
    \end{equation*}
    For each event in the RHS above, we have
    \begin{align}
        & \mathbb{P}( \| (K_1 - K_0) x_{k - \tau} \| \geq M, k - \tau \in I)  \nonumber \\
        = & \mathbb{P}( \| x_{k - \tau} \| \geq M / \mathcal{K} \mid k - \tau \in I)\,\mathbb{P}(k - \tau \in I) \nonumber \\
        \leq & \mathbb{P}( \| x_{k - \tau} \| \geq M / \mathcal{K} \mid k - \tau \in I). \nonumber
    \end{align}
    Since $\mathbb{P}(\| \tilde{x}_j \| \geq M / \mathcal{K}) \leq \mathcal{E}(M / \mathcal{K})$ for any $j$ according to Lemma~\ref{lemma:escape_prob}, and $k - \tau \in I$ indicates $x_{k - \tau}$ belongs to $\{\tilde{x}_j \}$, it follows that $\mathbb{P}( \| x_{k - \tau} \| \geq M / \mathcal{K} \mid k - \tau \in I) \leq \mathcal{E}(M / \mathcal{K})$.
    Taking the union bound over $\tau = 0,1,\ldots,t-1$, we reach the conclusion.
\end{proof}

\section{Proof of Theorem~\ref{thm:moment_prob_heavytail}}

In this appendix, we adopt the same definition of $\{\tilde{x}_j\},\{\tilde{A}_j\}$ and $\{\tilde{w}_j\}$ as in~\eqref{eq:subseq_notation} to~\eqref{eq:wtk}.

\subsection{Supporting lemmas}

\begin{lemma}
    Under Assumption~\ref{assumption:noise}, it holds
    \begin{equation*}
        \mathbb{E}\| \tilde{w}_j \|_{P_0}^4 \leq \tilde{\mu}_4 := \frac{\| P_0 \|^2 \mu_4}{1 - \rho_0^2} + \frac{2\rho_0 \tr({W} P_0)}{(1 - \rho_0^2)(1 - \rho_0)}.
    \end{equation*}
    \label{lemma:fourth_order_moment_of_response}
\end{lemma}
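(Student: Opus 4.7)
The plan is to view $\tilde{w}_j$ as the state of the stable LTI process $z_0 = 0$, $z_{s+1} = A_0 z_s + w_s$, with $A_0 := A+BK_0$. Indeed, by \eqref{eq:wtk}, $\tilde{w}_j$ is either a single sample $w_{i(j)}$ (which, up to reindexing, is $z_1$) or the weighted sum $\sum_{\tau=1}^{t}A_0^{t-\tau}w_{i(j)+\tau-1}$ (which is $z_t$). So it suffices to show $\mathbb{E}\|z_s\|_{P_0}^{4}\leq \tilde{\mu}_4$ uniformly in $s\geq 1$; the single-sample case is then absorbed because $\mathbb{E}\|w\|_{P_0}^4 \leq \|P_0\|^2\mu_4 \leq \tilde{\mu}_4$.

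I would run a Lyapunov recursion on $V_s := \|z_s\|_{P_0}^2$, mirroring the structure of Lemma~\ref{lemma:EV2}'s proof but adapted to $P_0,\rho_0$. From \eqref{eq:P0}--\eqref{eq:rho0} we have $A_0^T P_0 A_0 \preceq \rho_0 P_0$, hence
\begin{equation*}
V_{s+1} \leq \rho_0 V_s + \eta_s,\qquad \eta_s := 2 w_s^T P_0 A_0 z_s + \|w_s\|_{P_0}^2.
\end{equation*}
Independence of $w_s$ from $z_s$ together with $\mathbb{E} w_s=0$ yields $\mathbb{E}\eta_s = \tr(WP_0)$, so that a first induction gives the auxiliary bound $\mathbb{E}V_s \leq \tr(WP_0)/(1-\rho_0)$. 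Then I would square the recursion:
\begin{equation*}
\mathbb{E}V_{s+1}^2 \leq \rho_0^2\mathbb{E}V_s^2 + 2\rho_0\mathbb{E}(V_s\eta_s) + \mathbb{E}\eta_s^2,
\end{equation*}
and close the induction using the two moment estimates: $\mathbb{E}(V_s\eta_s) = \tr(WP_0)\mathbb{E}V_s$ (the odd-in-$w_s$ piece vanishes by conditioning on $z_s$), and a quartic bound on $\mathbb{E}\eta_s^2$ obtained from $(w_s^T P_0 A_0 z_s)^2 \leq \|w_s\|_{P_0}^2\|A_0 z_s\|_{P_0}^2 \leq \rho_0 V_s \|w_s\|_{P_0}^2$ together with $\mathbb{E}\|w_s\|_{P_0}^4 \leq \|P_0\|^2 \mu_4$ (which follows from $\|w\|_{P_0}^2 \leq \|P_0\|\,\|w\|^2$ and Assumption~\ref{assumption:noise}). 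The stability condition $\rho_0^2<1$ then makes the induction produce a geometric-series bound whose limit matches the form of $\tilde{\mu}_4$.

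The main obstacle is the cross term $\mathbb{E}(w_s^T P_0 A_0 z_s \cdot \|w_s\|_{P_0}^2)$ inside $\mathbb{E}\eta_s^2$: unlike in the Gaussian setting (where it vanishes by symmetry, as tacitly used in the proof of Lemma~\ref{lemma:EV2}), here Assumption~\ref{assumption:noise} does not pin down the third-moment tensor of $w_s$. I would handle it by Cauchy--Schwarz against the other two contributions, so that the bound on $\mathbb{E}\eta_s^2$ involves only $\mathbb{E}V_s$, $\tr(WP_0)$, and $\|P_0\|^2\mu_4$; this is exactly the combination responsible for the two summands in $\tilde{\mu}_4$. The rest is bookkeeping: substitute the auxiliary bound on $\mathbb{E}V_s$, collect constants, and take the geometric-series limit $\sum_{k\geq 0}\rho_0^{2k}$.
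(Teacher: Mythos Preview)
Your approach is sound and is a genuinely different route from the paper's. The paper expands $\|v\|_{P_0}^4=(v^TP_0v)^2$ directly as a quadruple sum over $i,j,k,l$, keeps the ``paired'' terms $i=j,\,k=l$, and argues the remainder has zero mean by independence; then it sums $\sum_i\rho_0^{2i}$ and $\sum_{i\neq j}\rho_0^{i+j}$ in closed form to read off the two summands of $\tilde\mu_4$. Your plan instead runs the squared-Lyapunov recursion on the auxiliary process $z_s$, exactly paralleling the proof of Lemma~\ref{lemma:EV2}. Both lead to a bound of the same shape; the direct expansion buys the sharper numerical constant.

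Two corrections to your plan. First, the ``main obstacle'' you flag is not one: since $w_s$ is independent of $z_s$ and $\mathbb{E}z_s=0$, one has
\[
\mathbb{E}\bigl[w_s^TP_0A_0z_s\,\|w_s\|_{P_0}^2\bigr]
=\bigl(\mathbb{E}[\|w_s\|_{P_0}^2\,w_s]\bigr)^T P_0A_0\,\mathbb{E}z_s=0,
\]
so the third-moment cross term vanishes under Assumption~\ref{assumption:noise} without any symmetry or Cauchy--Schwarz. Second, even with this simplification, your recursion (using $(w_s^TP_0A_0z_s)^2\le\rho_0V_s\|w_s\|_{P_0}^2$) yields
\[
\sup_s\mathbb{E}V_s^2\;\le\;\frac{\|P_0\|^2\mu_4}{1-\rho_0^2}+\frac{6\rho_0(\tr(WP_0))^2}{(1-\rho_0^2)(1-\rho_0)},
\]
i.e.\ a coefficient $6$ rather than $2$ in the second term --- precisely the same $6$ that the identical recursion produces in $\tilde{\mathcal Q}$ in Lemma~\ref{lemma:EV2_heavytail}. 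So ``the rest is bookkeeping'' is true for obtaining \emph{a} bound of the stated form, but not for matching the specific constant in $\tilde\mu_4$; for that you would need the paper's combinatorial expansion. Since $\tilde\mu_4$ only enters the downstream theorems as a constant, this discrepancy is harmless for the paper's conclusions, but it does mean your plan does not prove the lemma exactly as stated.
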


\begin{proof}
    Let $v = \sum_{i=0}^{t-1} A^i v_i$, where $v_i \stackrel{d}{=}w_1$ independently.
    According to the definition of $\{\tilde{w}_j\}$ in~\eqref{eq:wtk}, it holds $\mathbb{E}\| \tilde{w}_j \|_{P_0}^4 \leq \mathbb{E}\| v \|_{P_0}^4$ for any $j$. It holds for the above defined $v$ that
    \begin{align*}
        \| v \|_{P_0}^4 &= (v^T P_0 v)^2 \\
        &= \sum_{i=0}^{t-1}\sum_{j=0}^{t-1}\sum_{k=0}^{t-1}\sum_{l=0}^{t-1}(v_i^T(A^i)^T P_0 A^j v_j)(v_k^T(A^k)^T P_0 A^l v_l), \\
        &= \sum_{i=0}^{t-1} (v_i^T(A^i)^T P_0 A^i v_i)^2 + \\ &\quad \sum_{i=0}^{t-1}\sum_{\substack{j=0 \\ j\neq i}}^{t-1}(v_i^T(A^i)^T P_0 A^i v_i)(v_j^T(A^j)^T P_0 A^j v_j) + L \\
        &\leq \sum_{i=0}^{t-1} \rho_0^{2i}\|v_i \|_{P_0}^4 + \sum_{i=0}^{t-1}\sum_{\substack{j=0 \\j\neq i}}^{t-1} \rho_0^{i+j}  \| v_i \|_{P_0}^2 \| v_j \|_{P_0}^2 + L,
    \end{align*}
    where $L$ consists of terms that are linear w.r.t. at least one of $v_i$, and hence $\mathbb{E}L = 0$ since $\mathbb{E} v_i = 0$ and $\{v_i \}$ are mutually independent.
    Therefore,
    \begin{align*}
        & \mathbb{E}\| v \|_{P_0}^4 \leq
        \sum_{i=0}^{t-1} \rho_0^{2i} \mathbb{E}\|v_i \|_{P_0}^4 + \sum_{i=0}^{t-1}\sum_{\substack{j=0 \\ j\neq i}}^{t-1}\rho_0^{i+j} \mathbb{E} \| v_i \|_{P_0}^2 \mathbb{E}\| v_j \|_{P_0}^2\\
        &\leq \sum_{i=0}^\infty \rho_0^{2i} \|P_0\|^2 \mu_4 + \sum_{i=0}^\infty 2i(\rho_0^{2i-1}+\rho_0^{2i})\tr(WP_0) = \tilde{\mu}_4,
    \end{align*}
    and hence $\mathbb{E} \| \tilde{w}_j \|_{P_0}^4 \leq \tilde{\mu}_4$ for any $j$.
\end{proof}

\begin{lemma}
    Under Assumption~\ref{assumption:noise}, let $\tilde{V}_j = \tilde{x}_j^T P \tilde{x}_j$, then it holds for any $j$ that
    \begin{equation}
        \mathbb{E} \tilde{V}_{j}^{2}\leq \tilde{\mathcal{Q}},
        \label{eq:EV2_heavytail}
    \end{equation}
    where $\tilde{\mathcal{Q}}$ is defined in Theorem~\ref{thm:moment_prob_heavytail}.
    \label{lemma:EV2_heavytail}
\end{lemma}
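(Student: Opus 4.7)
The plan is to mirror the proof of Lemma~\ref{lemma:EV2} as closely as possible, changing only the steps that relied on the Gaussian tail of the driving noise. Starting from the common Lyapunov bound~\eqref{eq:common_lyap} and the triangle inequality, I would first write the one-step recursion $\tilde{V}_{j+1}\leq \rho \tilde{V}_j + \eta_j$ with $\eta_j = 2\tilde{w}_j^T P \tilde{A}_j \tilde{x}_j + \tilde{w}_j^T P \tilde{w}_j$, square it, and take expectation to obtain $\mathbb{E}\tilde{V}_{j+1}^2 \leq \rho^2\mathbb{E}\tilde{V}_j^2 + 2\rho\,\mathbb{E}(\tilde{V}_j\eta_j) + \mathbb{E}\eta_j^2$. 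The goal is then to show that this linear recursion has a steady-state bound equal to $\tilde{\mathcal{Q}}$, from which the lemma follows by induction on $j$ with base case $\tilde{x}_0=0$.

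The zero-mean and bounded-covariance conditions of Assumption~\ref{assumption:noise}, together with the conditional independence of $\tilde{w}_j$ from $(\tilde{x}_j,\tilde{A}_j)$ given $\mathcal{F}_{j-1}$, are all that is needed to replay verbatim from Lemma~\ref{lemma:EV2} both the auxiliary first-moment bound $\mathbb{E}\tilde{V}_j \leq \tr(\tilde{W}P)/(1-\rho)$ and the cross-moment estimate $\mathbb{E}(\tilde{V}_j\eta_j) \leq \tr(\tilde{W}P)\mathbb{E}\tilde{V}_j$. Inside the expansion of $\mathbb{E}\eta_j^2$, the purely quadratic piece $4\mathbb{E}(\tilde{w}_j^T P\tilde{A}_j\tilde{x}_j)^2 \leq 4\rho\tr(\tilde{W}P)\mathbb{E}\tilde{V}_j$ likewise uses only second-moment information and transfers unchanged. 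The one genuinely new estimate is the pure fourth-moment piece $\mathbb{E}(\tilde{w}_j^T P\tilde{w}_j)^2 = \mathbb{E}\|\tilde{w}_j\|_P^4$: the chi-square identity of the Gaussian case is no longer available, and I would instead bound it by $\|P\|_{P_0}^2\,\mathbb{E}\|\tilde{w}_j\|_{P_0}^4 \leq \|P\|_{P_0}^2\tilde{\mu}_4$, invoking Lemma~\ref{lemma:fourth_order_moment_of_response} for the second inequality. This is precisely the substitution that turns $(n^2+2n)\|P\|_{\tilde{W}^{-1}}^2$ in the numerator of $\mathcal{Q}$ into $\|P\|_{P_0}^2\tilde{\mu}_4$ in $\tilde{\mathcal{Q}}$.

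The main obstacle is the mixed cubic term $4\mathbb{E}[(\tilde{w}_j^T P\tilde{A}_j\tilde{x}_j)(\tilde{w}_j^T P\tilde{w}_j)]$ in the expansion of $\eta_j^2$. In the Gaussian proof it vanished by central symmetry of $\tilde{w}_j\mid\mathcal{F}_{j-1}$, but under Assumption~\ref{assumption:noise} the noise need not be symmetric, so this third-degree conditional moment may be nonzero and must be explicitly dominated. I would handle it by a Cauchy--Schwarz bound followed by a weighted Young splitting, $|4\mathbb{E}[(\tilde{w}_j^T P\tilde{A}_j\tilde{x}_j)(\tilde{w}_j^T P\tilde{w}_j)]| \leq 2\alpha\,\mathbb{E}(\tilde{w}_j^T P\tilde{A}_j\tilde{x}_j)^2 + 2\alpha^{-1}\,\mathbb{E}(\tilde{w}_j^T P\tilde{w}_j)^2$, and then absorb the two resulting pieces into the quadratic and quartic terms already controlled above. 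With a suitable choice of the weight $\alpha$ the induction yields a steady-state bound of the same algebraic form as the stated $\tilde{\mathcal{Q}}$, preserving the $(1-\rho)(1-\rho^2)$ denominator and the scaling by $\tr(\tilde{W}P)^2$ and $\|P\|_{P_0}^2\tilde{\mu}_4$, up to a possible harmless rescaling of the universal constants $6\rho$ and $(1-\rho)$ that appear in the numerator.
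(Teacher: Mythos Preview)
Your proposal follows the paper's proof almost exactly: the paper's argument for Lemma~\ref{lemma:EV2_heavytail} is a one-sentence pointer back to the proof of Lemma~\ref{lemma:EV2}, asserting that the \emph{only} change needed is to replace the chi-square bound $\mathbb{E}\|\tilde w_j\|_P^4\le (n^2+2n)\|P\|_{\tilde W^{-1}}^2$ by $\mathbb{E}\|\tilde w_j\|_P^4\le \|P\|_{P_0}^2\tilde\mu_4$ via Lemma~\ref{lemma:fourth_order_moment_of_response}. That is precisely the substitution you make, and on the quadratic cross term, the first-moment bound, and the recursion structure you are in complete agreement with the paper.

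Where you go beyond the paper is on the mixed cubic term $4\mathbb{E}[(\tilde w_j^T P\tilde A_j\tilde x_j)(\tilde w_j^T P\tilde w_j)]$. You are right that the original proof of Lemma~\ref{lemma:EV2} kills this term using $\mathbb{E}(\tilde w_j^T P\tilde w_j\,\tilde w_j^T\mid x_j)=0$ ``by symmetry'', a third-moment identity that holds for Gaussian noise but is \emph{not} implied by Assumption~\ref{assumption:noise}, which posits only zero mean, covariance $W$, and a finite fourth moment. The paper's proof of Lemma~\ref{lemma:EV2_heavytail} does not mention this term at all, so it is implicitly reusing the symmetry argument even though no symmetry is assumed; strictly speaking this is a gap in the paper, not in your proposal.

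Your Cauchy--Schwarz/Young patch is a sound way to close the gap, and it preserves the structure of the recursion and the $(1-\rho)(1-\rho^2)$ denominator. The honest caveat you already raise is the right one: absorbing the cubic term this way inflates the coefficients in front of $\tr(\tilde W P)^2$ and $\|P\|_{P_0}^2\tilde\mu_4$, so you obtain a bound of the same form as $\tilde{\mathcal Q}$ but with larger numerical constants, not the exact $\tilde{\mathcal Q}$ as defined in Theorem~\ref{thm:moment_prob_heavytail}. Since the paper's own proof does not actually justify the exact constants either (absent an unstated symmetry hypothesis on the noise), this discrepancy is not a defect of your argument.
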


\begin{proof}
    This proof parallels that of Lemma~\ref{lemma:EV2}, and the only difference is the bound on $\mathbb{E}(\tilde{w}_j^T P \tilde{w}_j \tilde{w}_j^T P \tilde{w}_j)$: now by Lemma~\ref{lemma:fourth_order_moment_of_response}, it holds
    \begin{equation*}
        \mathbb{E}(\tilde{w}_j^T P \tilde{w}_j \tilde{w}_j^T P \tilde{w}_j) = \mathbb{E} \| \tilde{w}_j \|_P^4\leq \|P\|_{P_0}^2 \tilde{\mu}_4,
    \end{equation*}
    from which the conclusion follows.
\end{proof}

\begin{lemma}
    Under Assumption~\ref{assumption:noise}, for any $a > 0$, it holds for any $j$ that
    \begin{equation*}
        \mathbb{P}\left(\left\|\tilde{x}_{j}\right\| \geq a\right) \leq  \mathcal{P}(a),
    \end{equation*}
    where $\mathcal{P}(a)$ is defined in Theorem~\ref{thm:moment_prob_heavytail}.
    \label{lemma:escape_prob_heavytail}
\end{lemma}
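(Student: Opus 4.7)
The plan is to mirror the proof of Lemma~\ref{lemma:escape_prob}, replacing the Gaussian concentration step by a fourth-moment Markov estimate combined with a weighted union bound in the spirit of Theorem~\ref{thm:exp_weight_sum}. First, I would iterate the recursion $\tilde{x}_{j+1} = \tilde{A}_j \tilde{x}_j + \tilde{w}_j$ to write $\tilde{x}_j = \sum_{s=0}^{j-1}(\prod_{r=s+1}^{j-1}\tilde{A}_r)\tilde{w}_s$, and then invoke the common Lyapunov inequality~\eqref{eq:common_lyap} (which applies to both possibilities for $\tilde{A}_j$ in~\eqref{eq:Atk}) to obtain $\| (\prod_{r=s+1}^{j-1}\tilde{A}_r)v \|_P \leq \rho^{(j-s-1)/2}\| v \|_P$. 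This gives $\|\tilde{x}_j\|_P \leq S := \sum_{s=0}^{j-1}\rho^{(j-s-1)/2}Y_s$ with $Y_s := \|\tilde{w}_s\|_P$, exactly as in the Gaussian case.

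Next, I would bound the tail of $S$ without appealing to Gaussianity. For $\sigma \in (0,1)$, the decomposition $a \geq \sum_{s=0}^{j-1}\sigma^{j-s-1}(1-\sigma)a$ and a union bound yield
\begin{equation*}
\mathbb{P}(S \geq a) \leq \sum_{s=0}^{j-1} \mathbb{P}\bigl(Y_s \geq (\sigma/\rho^{1/2})^{j-s-1}(1-\sigma)a\bigr).
\end{equation*}
I then apply Markov's inequality to $Y_s^4$ and use Lemma~\ref{lemma:fourth_order_moment_of_response} together with $\mathbb{E}\|\tilde{w}_s\|_P^4 \leq \|P\|_{P_0}^2 \mathbb{E}\|\tilde{w}_s\|_{P_0}^4 \leq \|P\|_{P_0}^2\tilde{\mu}_4$ to bound each summand by $\|P\|_{P_0}^2\tilde{\mu}_4 (\rho^{1/2}/\sigma)^{4(j-s-1)} / ((1-\sigma)^4 a^4)$.

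The crucial step is tuning $\sigma$. Choosing $\sigma = \rho^{1/4}$, the common ratio becomes $(\rho^{1/2}/\sigma)^4 = \rho$, and summing the geometric series gives $\sum_{s=0}^{j-1}\rho^{j-s-1} \leq 1/(1-\rho)$. Combining this with $(1-\sigma)^4 = (1-\rho^{1/4})^4$ yields the target
\begin{equation*}
\mathbb{P}(\|\tilde{x}_j\|_P \geq a) \leq \frac{\|P\|_{P_0}^2\tilde{\mu}_4}{(1-\rho^{1/4})^4(1-\rho)a^4} = \mathcal{P}(a).
\end{equation*}
To conclude, I convert from the $P$-norm to the Euclidean norm via $\{\|\tilde{x}_j\| \geq a\} \subseteq \{\|\tilde{x}_j\|_P \geq a\}$, which is valid after rescaling $P$ so that $P \succeq I$; this scaling preserves~\eqref{eq:common_lyap} and is WLOG since $\mathcal{P}$ is monotone in $\|P\|_{P_0}$.

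The only real technical delicacy is the exponent-matching: a naive choice such as $\sigma = \rho^{1/2}$ would give $(1-\rho^{1/2})^{-4}$ rather than the stated $(1-\rho^{1/4})^{-4}(1-\rho)^{-1}$, so one must pick the decomposition that balances the per-term Markov factor $(1-\sigma)^{-4}$ against the geometric sum $(1-\rho^2/\sigma^4)^{-1}$ to match $\mathcal{P}$. Everything else is an adaptation of the Gaussian argument: the recursive $P$-norm contraction is unchanged, and the fourth-moment bound on $\|\tilde{w}_s\|_P$ from Lemma~\ref{lemma:fourth_order_moment_of_response} directly replaces the Gaussian concentration from~\cite[Lemma 3.1]{ledoux1991probability} used in Lemma~\ref{lemma:escape_prob}.
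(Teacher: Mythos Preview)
Your proposal is correct and follows essentially the same approach as the paper: the same $P$-norm contraction $\|\tilde{x}_j\|_P \leq \sum_{s}\rho^{(j-s-1)/2}\|\tilde{w}_s\|_P$, the same fourth-moment Markov bound via Lemma~\ref{lemma:fourth_order_moment_of_response}, and the same weighted union-bound decomposition with the choice $\sigma=\rho^{1/4}$ to make the geometric ratio equal to $\rho$. Your final $P$-norm-to-Euclidean conversion via rescaling $P\succeq I$ is an extra step the paper's proof does not spell out (it stops at the $P$-norm bound); note, however, that this rescaling multiplies $\|P\|_{P_0}^2$ and hence $\mathcal{P}(a)$ by a constant, so it does not literally recover the stated $\mathcal{P}(a)$ for the original $P$---this is a cosmetic looseness shared with the paper itself.
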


\begin{proof}
    Similarly to the proof of Lemma~\ref{lemma:escape_prob}, it holds
    \begin{equation*}
        \| \tilde{x}_j \|_P \leq \sum_{s = 0}^{j-1} \rho^{(j-s-1)/2} \| \tilde{w}_s \|_P.
    \end{equation*}
    By Markov's inequality, for any $a > 0$, it holds
    \begin{equation*}
        \mathbb{P}(\| \tilde{w}_s \|_P \geq a) \leq \mathbb{E}\| \tilde{w}_s \|^4 / a^4 \leq \tilde{\mu}_4 \| P \|_{P_0}^2 / a^4,
    \end{equation*}
    where the last inequality follows from Lemma~\ref{lemma:fourth_order_moment_of_response}.
    Now let $\sigma  = \rho^{1/4}$. Considering the fact that $a \geq \sum_{s=0}^{j-s-1}\sigma^{j-s-1}(1-\sigma)a$, it holds
    \begin{align*}
        &\mathbb{P}(\| \tilde{x}_j \|_P \leq a) \leq  
        \sum_{s=0}^{j-1} \mathbb{P}(\rho^{(j-s-1)/2} \| \tilde{w}_s \|_P \geq \sigma^{j-s-1} (1 - \sigma) a) \\
        \leq & \tilde{\mu}_4 \| P \|_{P_0}^2 (1 - \rho^{1/4})^{-4}a^{-4} \sum_{s=0}^{j - 1} \rho^{j-s-1} \\
        \leq & \tilde{\mu}_4 \| P \|_{P_0}^2 (1 - \rho^{1/4})^{-4}(1 - \rho)^{-1}a^{-4}.
    \end{align*}
\end{proof}

\subsection{Proof of Theorem~\ref{thm:moment_prob_heavytail}}

\begin{proof}
    This proof parallels Theorem~\ref{thm:moment_prob}, but the following bounds need to be updated:
    \begin{itemize}
        \item $\mathbb{E} \| \tilde{x}_j \|_{P_0}^4 \leq \tilde{Q}\| P_0 \|_P^2$ for any $j$, according to Lemma~\ref{lemma:EV2_heavytail};
        \item $\mathbb{E} \| \tilde{w}_j \|_{P_0}^4 \leq \tilde{\mu}_4$ for any $j$, according to Lemma~\ref{lemma:fourth_order_moment_of_response};
        \item $\mathbb{P}(\| x_{k - \tau} \| \geq M / \mathcal{K} \mid k - \tau \in I) \leq \mathcal{P}(M / \mathcal{K})$ for any $k$ and $\tau$, according to Lemma~\ref{lemma:escape_prob_heavytail}.
    \end{itemize}
    Substituting the above bounds into the proof of Theorem~\ref{thm:moment_prob} in Appendix~\ref{sec:proof_moment_prob} leads to the conclusion.
\end{proof}

\end{document}